\newsavebox\myboxA
\newsavebox\myboxB
\newlength\mylenA
\newcommand*\xoverline[2][0.75]{%
    \sbox{\myboxA}{$\m@th#2$}%
    \setbox\myboxB\null% Phantom box
    \ht\myboxB=\ht\myboxA%
    \dp\myboxB=\dp\myboxA%
    \wd\myboxB=#1\wd\myboxA% Scale phantom
    \sbox\myboxB{$\m@th\overline{\copy\myboxB}$}%  Overlined phantom
    \setlength\mylenA{\the\wd\myboxA}%   calc width diff
    \addtolength\mylenA{-\the\wd\myboxB}%
    \ifdim\wd\myboxB<\wd\myboxA%
       \rlap{\hskip 0.5\mylenA\usebox\myboxB}{\usebox\myboxA}%
    \else
        \hskip -0.5\mylenA\rlap{\usebox\myboxA}{\hskip 0.5\mylenA\usebox\myboxB}%
    \fi}
\newcommand{\Xibar}{\xoverline[1.3]{\Xi}}
\newcommand{\F}{\mathcal{F}}
\renewcommand{\P}{\mathbb{P}}
\newcommand{\Q}{\mathbb{Q}}
\newcommand{\E}{\mathbb{E}}
\DeclareMathOperator{\Var}{\mathbb{V}ar}
\newcommand{\R}{\mathbb{R}}
\newcommand{\ofp}{(\Omega, \mathcal{F}, \P)}
\newcommand{\offp}{(\Omega, \mathcal{F}, \mathbb{F}, \P)}
\newcommand{\lnofp}{L^0(\Omega, \mathcal{F}, \P)}
\newcommand{\loofp}{L^1(\Omega, \mathcal{F}, \P)}
\newcommand{\lpofp}{L^p(\Omega, \mathcal{F}, \P)}
\newcommand{\lphiofp}{L^\Phi(\Omega, \mathcal{F}, \P)}
\newcommand{\lnpofp}{L^0_+(\Omega, \mathcal{F}, \P)}
\newcommand{\conv}{\mathrm{conv}}
\newcommand{\convc}{\overline{\mathrm{conv}}}
\DeclareMathOperator*{\argmin}{arg \, min}
\DeclareMathOperator*{\argmax}{arg \, max}
\def\ind{{\mathchoice{1\mskip-4mu\mathrm l}{1\mskip-4mu\mathrm l}
{1\mskip-4.5mu\mathrm l}{1\mskip-5mu\mathrm l}}}
\newtheorem{theorem}{Theorem}[section]
\newtheorem{definition}[theorem]{Definition}
\newtheorem{corollary}[theorem]{Corollary}
\newtheorem{proposition}[theorem]{Proposition}
\newtheorem{remark}[theorem]{Remark}
\newtheorem{lemma}[theorem]{Lemma}
\newtheorem{example}[theorem]{Example}
\newtheorem{assumption}[theorem]{Assumption}
\renewcommand{\le}{\leqslant}
\renewcommand{\leq}{\leqslant}
\renewcommand{\ge}{\geqslant}
\renewcommand{\geq}{\geqslant}
\title{\textbf{Cost-efficiency in Incomplete Markets}}
\author{
Carole Bernard \thanks{%
C. Bernard, Grenoble Ecole de Management,  12~rue Pierre S\'{e}mard, 38000 Grenoble, France  (e-mail: 
\texttt{carole.bernard@grenoble-em.com}).}
\thanks{Vrije Universiteit Brussel, Faculty of Economics, Belgium.}
\and
Stephan Sturm \thanks{ %
S. Sturm, Worcester Polytechnic Institute, Department of Mathematical Sciences, 100 Institute Road, Worcester, MA 06109, USA
	(e-mail: \texttt{ssturm@wpi.edu})}
}
\begin{document}

\maketitle

\begin{abstract}
This paper studies the topic of cost-efficiency in incomplete markets. A payoff is called cost-efficient if it achieves a given probability distribution at some given investment horizon with a minimum initial budget. Extensive literature exists for the case of a complete financial market. We show how the problem can be extended to incomplete markets and how the main results from the theory of complete markets still hold in adapted form. In particular, we find that in incomplete markets, the optimal portfolio choice for non-decreasing preferences that are diversification-loving (a notion introduced in this paper) must be ``perfectly'' cost-efficient. This notion of perfect cost-efficiency is shown to be equivalent to the fact that the payoff can be rationalized, i.e., it is the solution to an expected utility problem.
\end{abstract}

\vspace{5mm}
 
\begin{flushleft}
	 \textbf{Keywords:} Cost-efficiency, portfolio choice, law-invariant objective, utility maximization.\\
	 \textbf{Mathematics Subject Classification (2010):} 91G10, 60E15, 90B50.\\
	 \textbf{JEL classification:} C02, G11, D81, C61.
\end{flushleft}

\vspace{1cm}

\textbf{Declarations of interest:}   none.
\section{Introduction}

Cost-efficiency of a payoff refers to the property that it achieves a given probability distribution at some given investment horizon with a minimum initial budget. Cost-efficiency has shown to be a key tool for the analysis of several important questions in mathematical finance, e.g., the distribution builder and the quantile formulation of portfolio choices.  The distribution builder \cite{SGB00, S07, goldstein2008choosing} is used to effectively elicit consumer preferences, bypassing utility functions that are theoretically sound, but usually hard to estimate practically. The investor directly selects possible outcomes by determining his desired terminal wealth in one hundred scenarios and receives feedback if the chosen allocation is actually feasible with the available budget. This calculation hinges on a cost-efficiency argument, as it uses the input as probability mass function of the claim and calculates the budget needed for the most cost efficient hedging strategy.

 \cite{carlier2011optimal} as well as \cite{HZ11} provide a formulation of the portfolio selection problem that uses cost efficiency to reduce the problem to an optimization over quantile functions; this approach is in particular useful for settings in behavioural finance. This characterization  dates back to Dybvig's work \cite{D88a,D88b} who first shows in a complete discrete market in which all states are equally probable that cost-efficient payoffs must be non-increasing in the pricing kernel. In particular, solving for the payoff that maximizes a law-invariant increasing objective function then amounts to looking for a non-increasing functional of the  pricing kernel, or, equivalently, for the unknown quantile function (or probability distribution) of the optimum. This is the foundation of the so-called quantile approach to solving a large class of non expected utility maximization problems, e.g., \cite{schied2004neyman}, \cite{carlier2006law, carlier2008two,carlier2011optimal}, \cite{JZ08},  \cite{HZ11} and \cite{BBV14}.

Unfortunately, further progress has been restricted by the fact that the classical cost-efficiency set-up works only in complete markets (\cite{D88a,D88b}, \cite{schied2004neyman}, \cite{carlier2006law, carlier2008two} and \cite{BBV14}). This approach has been criticized as the attainability of the optimal payoff is guaranteed only in a complete market, which might not be the case in an incomplete market. As far as we know, existing results on cost-efficiency in incomplete markets are relatively scarce in the literature. Notable exceptions are the work of \cite{jouini2001efficient} who deals with cost-efficiency in an incomplete discrete market with a finite number of states, and of \cite{HZ11} (in Section 5) for an incomplete market with deterministic interest rate, returns, and volatility processes. These assumptions guarantee in particular that all attainable claims have the same optimal pricing kernel. 

The paper at hand aims to address this issue by providing a theoretical framework for cost-efficiency in incomplete markets and by proving central characterizations of (perfectly) cost-efficient payoffs. The results are based on characterizing the \textit{right} generalizations of the key concepts of the theory of cost-efficiency: Equality in distribution is replaced by a relaxation in convex order, and optimizers that retain the target distribution, termed \textit{perfectly cost-efficient} play the key role in the analysis. Furthermore, preference functionals are required to be \textit{diversification-loving}, a concept to be shown to be sufficient to guarantee perfect cost-efficiency of the optimizer while being weaker than more classical notions as (quasi-)concavity. We also investigate the relationship between cost-efficiency and  expected utility maximization and show that perfectly cost-efficient payoffs are, essentially, optimizers to utility maximization problems.

The paper is organized as follows. In Section~\ref{S2}, we set the playground, lay out the toys, and define the rules of the game. Besides revisiting the general market setup and facts about convex order and convex hulls, we discuss preference functionals in detail. In particular, we introduce the notion of a diversification-loving preference functional that is key for the further analysis and discuss its relation to other, well-known properties of functionals. The standing assumptions for the rest of the paper are spelled out at the end. Section~\ref{S3} extends the notion of cost-efficiency to incomplete markets by relaxing it in convex order (resp.\ convex hulls more generally) and introduces the new notion of perfect cost-efficiency (cost-efficiency while retaining the target distribution) that is crucial for all further results. The first major finding, namely that portfolio optimizers for diversification-loving, increasing and upper semicontinuous preference functionals are perfectly cost-efficient, is given in Theorem~\ref{theo1}.  Section~\ref{relat} provides then an explicit construction of an optimizer to the cost-efficiency problem based on a detailed analysis of related minimax/saddle-point problems. Section~\ref{app} shows that under mild regularity assumptions all perfectly cost-efficient payoffs can be rationalized, i.e., are the optimizers to classical expected utility maximization problems. In fact, Theorem~\ref{thD2} provides a general equivalence statement for many of the notions considered in this paper.  Section~\ref{S7} concludes.

\section{Setting}\label{S2}

We consider a financial market with risky assets modeled as locally bounded $d$-dimensional semimartingale $S$ above an atomless, filtered standard Borel probability space $\offp$ with $\mathbb{F} = (\mathcal{F}_t)_{0 \leq t \leq T}$.\footnote{The results also hold in the case of a discrete probability space with $n$ equiprobable states. This setting is discussed in the companion document  \cite{BS24appendix}.} Further, we assume that the filtration is complete and right-continuous and that $\mathcal{F}_T = \mathcal{F}$. We assume that there exists a riskless asset and, without loss of generality, that it is bearing no interest.

We assume that the market is free of arbitrage in the sense of No Free Lunch with Vanishing Risk (NFLVR). This is equivalent to say that the set of all equivalent martingale measures
\begin{equation*}
	\mathcal{M}^e := \bigl\{ \Q \sim \P \, : \,  S  \text{ is a } \Q\text{-local martingale} \bigr\}
\end{equation*}
is not empty, see Corollary 9.1.2 in \cite{DS06}. Denote the set of the associated pricing kernels by 
\begin{equation*}
	\Xi := \biggl\{\xi = \frac{d\Q}{d\P} \biggr\vert_{\F_T} \, : \, \Q \in \mathcal{M}^e\biggr\}.
\end{equation*}

Moreover, we denote by $\Xibar$ the closure of $\Xi$ in $\lnpofp$ with respect to the Ky-Fan metric (which metrizes convergence in probability),
\[
d_{KF}(X, Y ) = \inf\Bigl\{\varepsilon > 0 \, :\, \mathbb{P}\bigl[\vert X - Y\vert > \varepsilon\bigr] \leq \varepsilon\Bigr\},
\]
and note that both, $\Xi$ and $\Xibar$ are convex sets. 

Consider an investor with initial capital $x_0 \in \R$ and who trades with an admissible, that is predictable, $S$-integrable and self-financing, strategy $H$.  The value process of such a portfolio is then given by
\begin{equation*}
	X_{t}^{x_0,H} := x_0 + \int_{0}^{t} H_{s} \, dS_{s}, \qquad  0\le t\le T.
\end{equation*}
Denote by $\mathcal{X}(x_0)$ the set of all random variables superhedgeable at time $T$  with initial capital $x_0$ and some credit line $C$,
\begin{align}\label{procX}
	\mathcal{X}(x_0) &:= \Bigl\{ Z \in \lnofp \, : \,   Z \leq X_T^{x_0,H}  \mbox{ for some admissible strategy }H, \Bigr. \nonumber \\
     & \phantom{= \Bigl\{ \, } \quad\quad\quad\Bigl. X_t^{x_0,H} \geq -C \text{ for some } C \geq 0 \text{ for all } 0\le t\le T \Bigr\}.
\end{align}
 Moreover let
\begin{equation}\label{XX}
      \mathcal{X} := \bigcup_{x_0 \in \R} \mathcal{X}(x_0)
\end{equation}
be the set of all superhedgeable claims with finite initial budget. Note that 
$\mathcal{X}(x_0) = \mathcal{X}(0) + x_0$ because the credit line $C$ is not a fixed parameter. Indeed if $Z \leq X^{x_0,H}_T$ for some admissible strategy $H$ with $X^{x_0,H}_T \geq -C$ for some $C \geq  0$, then
$Z - x_0 \leq X^{0,H}_T$ and $X^{0,H}_T \geq -(C + x_0)^+$ which proves the claim. We then have \[\mathcal{X} = \mathbb{R} + \mathcal{X}(0).\]
Conversely, we associate to every claim $X \in \mathcal{X}$ its cost $c(X)$, the amount of money needed to superreplicate it in all possible states of the world:
\begin{equation}\label{SHP0}
    c(X) := \inf\{x_{0}\in\mathbb{R} :X\in \mathcal{X}(x_0) \}.
\end{equation}
 It is well-known (see, e.g., Section 3 of \cite{K96} or Th.\,9.5.8 in \cite{DS06}) that
\begin{equation}\label{SHP}
    c(X) = \sup_{\xi \in \Xi} \E[\xi X].
\end{equation}
Therefore, we call it also the superhedging price.

\subsection{Convex Order and Convex Hull of Equally Distributed Variables}

Let $F$ be a distribution (which we identify with its cumulative distribution function, or cdf),  $D(F)$ be the set of all random variables with the distribution $F$ and $\conv(F)$ be the set of convex combinations of random variables with this distribution $F$.  $\convc(F)$ be its closure with respect to the Ky Fan metric. A key concept used throughout the paper is the characterization of the closed convex hull $\convc(F)$ using convex order: Let us denote  $Z \preceq_{cx} X$ if $Z$ is smaller than $X$ in convex order, which means that for all convex functions $\varphi$ such that the expectations exist, $\E[\varphi(Z)]\le \E[\varphi(X)].$ As this property depends only on the distribution, we write similarly $G \preceq_{cx} F$ if $X \preceq_{cx} Y $ for $X \in D(G)$, $Y \in D(F)$ and we allow for mixed use such as $X \preceq_{cx} F$ to denote convex order. Similarly, if $X$ is a random variable with distribution $F$, we use at times also the notation $D(X)$ for $D(F)$. Throughout the paper, we will also use concave order. Recall that $Y$ dominates $X$ in concave order (that we denote by $X\preceq_{cv}  Y$) if and only if $Y\preceq_{cx}  X.$ We recall the following result from Lemma 2.2 in \cite{he2016risk}.

\begin{proposition}\label{cxorder}
	Let $X\in\loofp$ with cumulative distribution function (cdf) $F$, then
	\[
		\convc(F)=\bigl\{ Y \in \loofp \, : \, Y \preceq_{cx} X\bigr\}.
	\]
\end{proposition}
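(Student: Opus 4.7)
The statement is an equality of two sets of random variables, so the proof splits into two inclusions. The forward inclusion $\convc(F) \subseteq \{Y \preceq_{cx} X\}$ is elementary, while the reverse is the substantive content, essentially a Strassen-type representation; the latter is in fact the main work of the cited \cite[Lemma 2.2]{he2016risk}.

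For the forward inclusion, take a finite convex combination $Y = \sum_i \lambda_i Y_i$ with each $Y_i$ distributed as $F$. For any convex $\varphi$ with $\E[\varphi(X)] < \infty$, Jensen gives $\E[\varphi(Y)] \leq \sum_i \lambda_i \E[\varphi(Y_i)] = \E[\varphi(X)]$, so $\conv(F) \subseteq \{Y \preceq_{cx} X\}$. To pass to the closure under convergence in probability, I would use the stop-loss characterization: $Y \preceq_{cx} X$ iff $\E[Y]=\E[X]$ and $\E[(Y-k)^+] \leq \E[(X-k)^+]$ for every $k \in \R$. A sequence $\{Y_n\}$ with $Y_n \preceq_{cx} X$ is uniformly integrable, by de la Vall\'ee-Poussin applied to a convex witness for the integrability of $X \in \loofp$, so convergence in probability $Y_n \to Y$ upgrades to $L^1$-convergence of $Y_n$ and of the Lipschitz functionals $(Y_n - k)^+$; the inequalities survive the limit and yield $Y \preceq_{cx} X$.

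For the reverse inclusion, given $Y$ with $Y \preceq_{cx} X$, Strassen's theorem produces an $X' \sim X$ on $\offp$ and a sub-$\sigma$-algebra $\mathcal{G} \subseteq \F$ with $Y = \E[X'|\mathcal{G}]$ almost surely; realization on the original space (rather than an enlargement) relies on its atomless standard Borel structure. Approximating $\mathcal{G}$ by finitely generated $\mathcal{G}_n$ and applying martingale convergence, the task reduces to approximating each simple conditional expectation $\E[X'|\mathcal{G}_n] = \sum_i c_i^n \ind_{A_i^n}$, where $c_i^n = \E[X'\ind_{A_i^n}]/\P(A_i^n)$, by convex combinations of copies of $X$. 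On each atom $A_i^n$ of $\mathcal{G}_n$, atomlessness supplies measure-preserving rearrangements of $X'|_{A_i^n}$ whose Ces\`aro averages concentrate on the constant $c_i^n$; these are then glued consistently across the partition to produce random variables $Z_j \sim X$ with $\frac{1}{N}\sum_{j=1}^N Z_j \to \E[X'|\mathcal{G}_n]$ in probability, and a diagonal extraction yields an element of $\convc(F)$ equal to $Y$.

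The main obstacle is precisely this last rearrangement-and-gluing step: producing, from atomlessness alone, measure-preserving shuffles whose averages reconstruct a conditional expectation while respecting the global law $F$ across all partition cells simultaneously. This atomless continuum analogue of the Hardy-Littlewood-P\'olya permutation representation is where the hypothesis enters seriously, and it is carried out in detail in \cite[Lemma 2.2]{he2016risk}; the assumption that $\supp(F)$ is bounded below is a technical regularity that guarantees the stop-loss integrals in the forward direction are finite and removes any ambiguity between convex and increasing-convex order.
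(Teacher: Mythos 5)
Your proof is correct, and for the substantive direction it ends up in the same place as the paper: both arguments delegate the inclusion $\bigl\{Y : Y \preceq_{cx} X\bigr\} \subseteq \convc(F)$ to \cite[Lemma 2.2]{he2016risk} (your Strassen-plus-rearrangement outline is a fair description of what that lemma does, but the real work lives in the citation either way, and the paper does not reproduce it). Where you genuinely differ is the forward inclusion $\convc(F) \subseteq \bigl\{Y : Y \preceq_{cx} X\bigr\}$. The paper passes to the limit by applying Fatou's lemma to $\varphi(Y_n)$ for each continuous convex $\varphi$; you instead reduce to the stop-loss characterization, establish uniform integrability of $\conv(F)$ via de la Vall\'ee-Poussin (the convex Young witness for $X \in \loofp$ is itself an admissible test function for convex order, so the bound is uniform over the whole hull), and upgrade convergence in probability to $L^1$-convergence before taking limits. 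Your route is the more robust one: Fatou in liminf form needs $\varphi(Y_n)$ bounded below by an integrable function, which is not automatic for decreasing convex $\varphi$ (e.g.\ $\varphi(x)=-x$) when the $Y_n$ are unbounded above, and convex order requires the two-sided statement $\E[Y]=\E[X]$ that a one-sided Fatou inequality does not deliver on its own; restricting to stop-loss functions plus equality of means, with uniform integrability doing the limit work, sidesteps both issues. One small quibble: you attribute the hypothesis that $\supp(F)$ is bounded below to finiteness of the stop-loss integrals, but these are finite for any $X \in \loofp$; the hypothesis earns its keep in the paper's Fatou argument and inside the cited lemma, and your uniform-integrability argument does not visibly use it.
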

 
\begin{proof}
	Under the assumptions that $(\Omega, \mathcal{F}, \mathbb{P})$ is an atomless standard Borel space, Sections 2.2 \& 2.3 of \cite{he2016risk} imply that the set of random variables smaller than $X$ in convex order is the $L^1$-convex closure of $F$-distributed random variables. Thus
	\[
	\bigl\{ Y \in \loofp \, : \, Y \preceq_{cx} X\bigr\} = \convc^{L^1}(F) \subseteq \convc(F).
	\]
But the $L^1$ closure and the $L^0$ closure actually agree as the set $\bigl\{ Y \in \loofp \, : \, Y \preceq_{cx} X\bigr\}$ is uniformly integrable; therefore, by Vitali's convergence theorem, convergence in probability on this set implies convergence in $L^1$, and the $L^0$- and $L^1$-closures agree. Indeed, the set $D(F)$ is uniformly integrable trivially, and therefore, by de la Vall\'{e}e Poussin's characterization, there exists a convex function $\varphi$, bounded from below and of superlinear growth, e.g., $\varphi(z) = \sum_{n=1}^\infty (z-n)^+$, that $\sup_{Y \in D(F)} \E[\varphi(Y)] <\infty$. But as for $\Lambda^n = \{\lambda^n = (\lambda^n_1, \ldots, \lambda_n^n) \, : \, \lambda_i^n > 0, \, \sum_{i=1}^n \lambda_i^n = 1\}$ we have by Fatou's lemma
\begin{align*}
\sup_{Y \preceq_{cx} X} \E[\varphi(Y)] & =  \sup_{\substack{Y_i^n \in D(F) \\ \lambda^n \in \Lambda^n}}\E\Biggl[\varphi \Biggl( L^1\text{-}\lim_{n \to \infty} \sum_{i=1}^n \lambda_i^n Y_i^n\Biggr)\Biggr] \leq \sup_{\substack{Y_i^n \in D(F) \\ \lambda^n \in \Lambda^n}}\E\Biggl[\varphi \Biggl( \sum_{i=1}^n \lambda_i^n Y_i^n\Biggr)\Biggr] \\ &\leq  \sup_{\substack{Y_i^n \in D(F) \\ \lambda^n \in \Lambda^n}}\E\Biggl[ \sum_{i=1}^n \lambda_i^n \varphi \bigl(Y_i^n\bigr)\Biggr] = \sup_{Y_i^n \in D(F)}\E\bigl[ \varphi \bigl(Y_i^n\bigr)\bigr] < \infty
\end{align*}
it follows, using again de la Vall\'{e}e Poussin, that  $\bigl\{ Y \in \loofp \, : \, Y \preceq_{cx} X\bigr\} $ is uniformly integrable.
\end{proof}
  
\begin{remark}
	Let $F$ be a distribution with finite mean, i.e., $\int_{\mathbb{R}}|x| F(dx)<\infty$. An immediate corollary of Proposition~\ref{cxorder} is that $\convc(F)$  is rearrangement invariant (\cite{delbaen2009}, called law-invariant in Definition 2.1 of \cite{bellini2021}), i.e., if $X$ and $Z$ are r.v. equal in distribution, then $X\in\convc(F)$ implies $Z\in\convc(F).$  
\end{remark}

To appreciate the difference between the general and the integrable setting, let $F$ be a distribution bounded from below. Note that $F$ has finite mean if and only if the convex closure $\convc(F)$ is bounded in probability (see Appendix~\ref{appconvF} for details). On the other hand, the set $D(F)$ is bounded in probability even for $F$ with infinite mean. Thus there is a huge gap between the family $X \in D(F)$ and its convex closure.

\subsection{Preferences}

We consider an investor who seeks  to optimize her wealth according to her preferences given by the functional $V \, : \, \mathcal{D} \to \R \cup \{-\infty\}$, where $\mathcal{D}$ is either $\lnofp$ or a \textit{Schur-convex}\footnote{A set $\mathcal{D} \subseteq \loofp$ is called Schur-convex if $X \in \mathcal{D}$ implies $Y \in \mathcal{D}$ for all $Y \in \loofp$, $Y \preceq_{cx} X$, cf. Definition 3.3 in \cite{bellini2021}. Note in particular that the subspaces $\lpofp$ with $1 \leq p \leq \infty$ are Schur-convex, as are Orlicz spaces $\lphiofp$ with convex Young functional $\Phi.$} subset of $\loofp$ such that  $\mathcal{D}+\mathbb{R}_{\geq 0} \ind_{\Omega} \subseteq \mathcal{D}$. We say that $V$ is \textit{non-decreasing} if, for two payoffs $X$ and $Y$ satisfying $X\leq Y$ almost surely, one has that $V(X)\leq V(Y).$ $V$ is \textit{increasing}, if besides non-decreasing it is strictly increasing under addition of constants, i.e.,  $V(X)<V\bigl(X+\varepsilon \ind_{\Omega}\bigr)$ for every $X \in \mathcal{D}$ and $\varepsilon>0$. We say that $V$ is \textit{law-invariant} (or \textquotedblleft state-independent\textquotedblright) if any two identically distributed payoffs $X$ and $Y$ are valued equally, i.e., $Y \in D(X)$ implies that $V(X)=V(Y)$. 

Recall also that $X\preceq _{fsd}Y$ means that $Y$ is (first-order) stochastically larger than $X$, i.e., for all $x\in \mathbb{R}$, $F_{X}(x)\geq F_{Y}(x),$ where $F_{X}$ and $F_{Y}$ denote the cdfs of $X$ and $Y$ respectively. Equivalently, for all non-decreasing functions $u$ such that both  expectations $\E[u(X)]$ and $\E[u(Y)]$ exist in $\mathbb{R} \cup \{\infty\}$, we have $\E[u(X)]\le \E[u(Y)]$. The functional $V$ is said to respect first order stochastic dominance if $X\preceq _{fsd}Y$ implies that $V(X)\le V(Y)$.  Furthermore, we will say that the functional $V$ is consistent with concave order if $X\preceq_{cv}  Y$ implies that $V(X)\le V(Y)$. 

A preference functional  $V \, : \, \mathcal{D} \to \R \cup \{-\infty\}$ is  non-decreasing and law-invariant if and only if it is consistent with first-order stochastic dominance (see Theorem 1 in \cite{BCV15}). It turns out that a similar characterization can be obtained for concave order if one assumes preference for diversification. The key notion is the following:

\begin{definition}\label{defDiv}
	The preference functional $V$ is said to be diversification-loving if for all cdfs $F$, for all $n\in\mathbb{N}$, for all $X_{i}\in \mathcal{D}$, $i=1,...,n$, such that $X_{i}\in D(F)$, 
	\begin{equation}\label{dive}
		\forall j\in\bigl\{1,2,...,n\bigr\},\quad \forall \lambda_{i}\ge0\ s.t\ \sum_{i=1}^n\lambda_{i}=1, \quad\quad V\biggl(\sum_{i=1}^n\lambda_{i}X_{i}\biggr)\ge V(X_{j}),
	\end{equation}
	and thus by summation,
	\[
		\forall \lambda_{i}\ge0\ s.t\ \sum_{i=1}^n\lambda_{i}=1, \quad\quad V\biggl(\sum_{i=1}^n\lambda_{i}X_{i}\biggr)\ge \sum_{i=1}^n\lambda_{i}V(X_{i}).
	\]
\end{definition}

Observe that if $V$ is diversification-loving then it is law-invariant. To see that, consider $X_{1}$ and $X_{2}$ with cdf $F$, then applying property \eqref{dive} with $\lambda_{1}=1, \lambda_{2}=0$ and $\lambda_{1}=0, \lambda_{2}=1$ we have respectively that $V(X_{1})\geq V(X_{2})$ and $V(X_{2})\geq V(X_{1})$, which proves that $V$ is law-invariant. Furthermore, note that if $V$ is concave and law-invariant, then it is diversification-loving but the reverse implication is not true in general: the ``concavity property'' is not required for arbitrary random variables, but only for those who have the same distribution. It turns out that diversification-loving  is equivalent to consistency with respect to concave order (also called consistent with majorization order or Schur-concave, cf. \cite{Dana2005}). 

Our notion of ``diversification-loving'' is related to, but distinct from, the classical notions of preference for diversification considered by \cite{dekel1989asset} (see Definition 2) and by \cite{chateauneuf2002diversification} (see Definition 1). In those papers, diversification is required for \emph{equally preferred} assets/acts: if $X_1,\dots,X_n$ are all equally preferred, then every convex combination $\sum_i \lambda_i X_i$ is weakly preferred to each $X_j$. In contrast, Definition~\ref{defDiv} only requires this property for assets having the same distribution. Hence, given that law invariance follows from our definition, the classical preference-for-diversification property implies our diversification-loving property, but not the converse. Namely, if claims with equal distribution $F$ are equally preferred, then we can use the results of \cite{chateauneuf2002diversification} and \cite{dekel1989asset} to conclude that the convex combination of these claims is preferred to each of them. However, the reverse implication does not hold true in general. For instance, the functional $V(X):=\E[X]-\Var[X]^{1/4}$ on $L^2$ is diversification-loving, but it does not satisfy the classical preference-for-diversification property: indeed, if $\varepsilon$ is equal to $1$ or $-1$ with probability $\frac{1}{2}$, and one sets $X_1:=6+\varepsilon$, $X_2:=7+4\varepsilon$, and $\lambda_1=\lambda_2=\tfrac12$, then $V(X_1)=V(X_2)=5$ whereas $V\bigl(\frac{X_1+X_2}{2}\bigr)=6.5-\sqrt{2.5}<5$. In fact, an equivalent of our notion with merely two random variables is formulated in Definition 4.7 of \cite{chateauneuf2007sure} and is referred to as ``strong diversification.''

\begin{theorem} \label{prefssd}
	An $L^1$-upper semicontinuous preference functional  $V \, : \, \loofp \supseteq \mathcal{D} \to \R \cup \{-\infty\}$ is diversification-loving  if and only if it is consistent with concave order.
\end{theorem}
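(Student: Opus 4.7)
The plan is to prove both implications and use Proposition~\ref{cxorder} as the bridge between convex order and convex hulls of equi-distributed random variables.

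For the easier direction, assume $V$ respects convex order. \textbf{Law-invariance:} if $X \sim Y$, then $\E[\varphi(X)] = \E[\varphi(Y)]$ for any convex $\varphi$, so $X \preceq_{cx} Y$ and $Y \preceq_{cx} X$, giving $V(X) = V(Y)$. \textbf{Diversification-loving:} given $X_1,\ldots,X_n$ with $X_i \sim F$ and weights $\lambda_i \ge 0$ summing to one, set $Y = \sum_i \lambda_i X_i$. For any convex $\varphi$, the pointwise Jensen inequality $\varphi(Y) \le \sum_i \lambda_i \varphi(X_i)$ followed by taking expectations yields
\begin{equation*}
    \E[\varphi(Y)] \le \sum_{i=1}^n \lambda_i \E[\varphi(X_i)] = \E[\varphi(X_j)]
\end{equation*}
since all $X_i$ are identically distributed. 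Hence $Y \preceq_{cx} X_j$, so $V(Y) \ge V(X_j)$ by assumption.

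For the reverse direction, assume $V$ is upper semicontinuous, law-invariant, and diversification-loving, and let $Y \preceq_{cx} X$ with $X, Y \in D$. The goal is to exhibit an explicit approximating sequence in $\convc(F_X)$ on which diversification-loving and law-invariance can be combined. By Proposition~\ref{cxorder} applied to $F_X$, there exists a sequence $Y_n = \sum_{i=1}^{k_n} \lambda_i^n Z_i^n$ with $Z_i^n \sim F_X$, $\lambda_i^n > 0$, $\sum_i \lambda_i^n = 1$, and $Y_n \to Y$ in probability. Schur-convexity of $D$ (or the fact that $D = \lnofp$) ensures each $Y_n \in D$, since $Y_n \preceq_{cx} X$. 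Diversification-loving gives $V(Y_n) \ge V(Z_1^n)$, and law-invariance gives $V(Z_1^n) = V(X)$, so $V(Y_n) \ge V(X)$ for every $n$. Upper semicontinuity of $V$ (with respect to convergence in probability, as is natural in this setting) then yields
\begin{equation*}
    V(Y) \ge \limsup_{n \to \infty} V(Y_n) \ge V(X),
\end{equation*}
which is precisely consistency with convex order.

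The main subtlety, as opposed to any real obstacle, is reconciling the hypothesis of Proposition~\ref{cxorder} (support bounded from below) with the formulation of the theorem for general $X \in D \subseteq \loofp$; but the standard $L^1$-characterization of convex order via the closed convex hull of equi-distributed random variables holds in full generality for integrable $X$, and the argument in the proof of Proposition~\ref{cxorder} (Fatou's lemma for continuous convex $\varphi$) carries over verbatim. A secondary technical point is checking that $Y_n \in D$, which is automatic when $D$ is Schur-convex and requires no extra work in the $\lnofp$ case.
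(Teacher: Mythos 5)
Your proof is correct and follows essentially the same route as the paper's: the forward implication via Jensen's inequality pointwise plus law-invariance of $\E[\varphi(X_i)]$, and the reverse implication via the approximating convex combinations from Proposition~\ref{cxorder}, diversification-loving, law-invariance, and upper semicontinuity. Your added remarks on Schur-convexity of $D$ and on the bounded-support hypothesis of Proposition~\ref{cxorder} are sensible clarifications of points the paper leaves implicit, but they do not change the argument.
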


\begin{proof}
	First, assume that  $V \, : \, \loofp \supseteq \mathcal{D} \to \R \cup \{-\infty\}$ is  diversification-loving and consider $X$ and $Y$ such that $X\preceq_{cv}Y$. Then, by Proposition~\ref{cxorder}, $Y=\lim_{n\rightarrow+\infty}\sum_{i=1}^n\lambda_i^n X_i^n$ where $X_i^n$ has cdf $F$. Using the diversification-loving property, we have that for all $j$ and $n$
	\[
		V(X_j^n)\le V\biggl(\sum_{i=1}^n \lambda_i^n X_i^n\biggr),
   	\]
	and thus by taking the limit 
	\[
		V(X_j^n)\le \limsup_{n \to \infty} V\biggl(\sum_{i=1}^n \lambda_i^n X_i^n\biggr) \le V\biggl(\lim_{n \to \infty}\sum_{i=1}^n \lambda_i^n X_i^n\biggr) = V(Y)
	\] 
	by upper semicontinuity. By law-invariance of $V$, we have that $V(X_j^n)=V(X)$, thus $V(X)\le V(Y)$. This ends the proof.
	
    For the reverse implication, assume that $V$ is consistent with concave order. Let $X$ and $Y$ have distribution $F$, then $X\preceq_{cv}Y$ and $Y\preceq_{cv}X$ thus $V(X)\leq V(Y)$ and $V(Y)\leq V(X)$. Therefore $V(X)=V(Y)$, i.e., $V$ is law-invariant. Furthermore, let $X_i$ for $i=1,\ldots,n$ be distributed with $F$, and $\lambda_i\in[0,1]$ such that $\sum_i\lambda_i=1$, then for all functions $\varphi$ concave, we have 
    \[
\sum_i\lambda_i \varphi(X_{i})\leq		\varphi\biggl(\sum_i \lambda_i X_{i}\biggr).
	\]
	We then take the expectation on both sides (when these expectations exist) and use the law-invariance property to write that $\E[\varphi(X_{i})]=\E[\varphi(X_{j})]$ for all $i,j$. Thus, for all $\varphi$ concave and all $j\in\bigl\{1,...,n\bigr\}$ such that the following expectations are finite,
	\[
 \E\bigl[\varphi( X_{j})\bigr]  \leq \E\biggl[\varphi\biggl(\sum_i \lambda_i X_{i}\biggr)\biggr]
	\]
	which equivalently means that $X_{j}\preceq _{cv}\sum_i\lambda_i X_{i}$. And as $V$ is consistent with concave order, then $V(X_{j})\leq V(\sum_i \lambda_i X_{i}),$ i.e., $V$ is then diversification-loving.
\end{proof}

Note that Theorem~4.2 in \cite{chateauneuf2007sure} is very much in the spirit of our Theorem~\ref{prefssd}. However, our result is not a direct corollary of theirs, since they work on the space of bounded random variables under compact continuity and strict monotonicity, whereas we consider a  domain $\mathcal D\subseteq L^1(\Omega,\mathcal F,\mathbb P)$, do not assume strict monotonicity or law-invariance, rather derive law-invariance from diversification-loving, and obtain an equivalence with concave order rather than second-order stochastic dominance.

\begin{corollary} \label{prefssd3}
 	For a convex domain $\mathcal{D}$, an $L^1$-upper semicontinuous, law-invariant preference functional  $V \, : \, \loofp \supseteq \mathcal{D} \to \R \cup \{-\infty\}$ that is quasi-concave (i.e., all super-level sets are convex, or, equivalently, $V\bigl(\lambda X +(1-\lambda) Y\bigr) \geq \min\bigl(V(X),V(Y)\bigr)$ for all $X$, $Y \in \mathcal{D}$ and $\lambda \in [0,1]$) is diversification-loving.
\end{corollary}

\begin{proof}
	Fix $X_i \in \mathcal{D}$ with $X_i \in D(F)$. By repeated use of the quasi-concavity, we get that
	\[
		V\biggl( \sum_{i=1}^n \lambda _i X_i\biggr) \geq \min_{i \in \{1, \ldots, n\}} V\bigl(X_i\bigr) = V\bigl(X_j\bigr)
	\]
	for $\lambda_i \geq0$ with $\sum_{i=1}^n \lambda_i = 1$ by law-invariance for any $j \in \{1,\ldots, n\}$. But this means that $V$ is diversification-loving.
\end{proof}

Second-order stochastic dominance  is closely related to concave order defined above. Namely, $X\preceq _{ssd}Y$ means that $Y$ is (second-order) stochastically larger than $X$, i.e., for all $x\in \mathbb{R}  $, $\int_{-\infty}^xF_{X}(t)dt\geq \int_{-\infty}^xF_{Y}(t)dt,$ where $F_{X}$ and $F_{Y}$ denote the cdfs of $X$ and $Y \in \loofp$ respectively. Equivalently, for all concave non-decreasing functions $u$, $\E[u(X)]\le \E[u(Y)]$. We then have the following corollary.

\begin{corollary} \label{prefssd2}
	An $L^1$-upper semicontinuous preference functional  $V \, : \, \loofp \supseteq \mathcal{D} \to \R \cup \{-\infty\}$ is diversification-loving and non-decreasing if and only if it is consistent with second-order stochastic dominance.
\end{corollary}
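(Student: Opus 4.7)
The plan is to reduce the corollary to Theorem~\ref{prefssd} by combining it with a classical decomposition of second-order stochastic dominance. Since, by Theorem~\ref{prefssd}, an upper semicontinuous $V$ is diversification-loving and law-invariant if and only if it is consistent with convex order, the corollary reduces to showing that $V$ is consistent with $\prec_{ssd}$ if and only if $V$ is non-decreasing, law-invariant, and consistent with convex order.

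For the backward direction, the plan is to verify each property in turn. Non-decreasing is immediate since $X \leq Y$ gives $X \prec_{fsd} Y$ and hence $X \prec_{ssd} Y$; law-invariance follows from $X \sim Y$ giving both $X \prec_{ssd} Y$ and $Y \prec_{ssd} X$. For consistency with convex order, if $Y \preceq_{cx} X$, applying the definition with $\varphi = -u$ for any non-decreasing concave $u$ (so $\varphi$ is convex) yields $\E[u(X)] \leq \E[u(Y)]$, i.e., $X \prec_{ssd} Y$, hence $V(X) \leq V(Y)$. Theorem~\ref{prefssd} then delivers the diversification-loving property.

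For the forward direction, the plan is to use that non-decreasing and law-invariant together imply consistency with first-order stochastic dominance (cf.\ the paper's earlier discussion citing \cite{BCV15}) and that Theorem~\ref{prefssd} gives consistency with convex order. Given $X \prec_{ssd} Y$, the next step is to invoke the classical decomposition: there exists a random variable $Z$ with $X \prec_{fsd} Z$ and $Y \preceq_{cx} Z$. Chaining yields $V(X) \leq V(Z) \leq V(Y)$: the first inequality from fsd-consistency, the second from cx-order consistency (since $Y \preceq_{cx} Z$ means $V(Y) \geq V(Z)$).

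The main obstacle is establishing the decomposition in exactly the right form. A naive Strassen-type coupling gives $Z = \E[\tilde Y \mid \tilde X]$ with $\tilde X \leq Z$ but $Z \preceq_{cx} \tilde Y$, placing $Z$ on the \emph{wrong} side of $Y$ in convex order: one would obtain both $V(X) \leq V(Z)$ and $V(Y) \leq V(Z)$, which fails to chain. The correct $Z$ must be a mean-preserving spread of $Y$, which arises as the dual Strassen form for increasing concave order. A secondary concern is that $Z$ must lie in the domain of $V$; this is automatic when $D = \lnofp$ and can be handled by standard truncation arguments for Schur-convex $D \subseteq \loofp$.
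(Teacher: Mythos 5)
Your overall strategy matches the paper's: reduce to Theorem~\ref{prefssd} and handle the remaining gap between convex order and second-order stochastic dominance by a Strassen-type decomposition. Your backward direction is correct and arguably cleaner than the paper's (which re-runs the Dybvig decomposition there as well). In the forward direction, however, you use the \emph{dual} decomposition: given $X \prec_{ssd} Y$ you interpolate with a $Z$ satisfying $X \prec_{fsd} Z$ and $Y \preceq_{cx} Z$, i.e., $Z$ is a mean-preserving spread of $Y$. The paper instead uses the Dybvig/Rothschild--Stiglitz form $X \sim (Y-A)+\varepsilon$ with $A \geq 0$ and $\E[\varepsilon \mid Y-A]=0$, i.e., it interpolates with $W := Y-A$ satisfying $W \preceq_{cx} X$ and $W \leq Y$. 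Both decompositions are classical (both arise from the supermartingale coupling $\E[\hat X \mid \hat Y] \leq \hat Y$: the paper's $W$ is $\E[\hat X\mid\hat Y]$, your $Z$ is $\hat X + \hat Y - \E[\hat X\mid\hat Y]$), and both chain correctly with the stated monotonicity conventions.

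The genuine problem is where the intermediate variable lives. The domain $D$ is allowed to be an arbitrary Schur-convex subset of $\loofp$, i.e., closed \emph{downward} in convex order. The paper's $W$ satisfies $W \preceq_{cx} X \in D$, so $W \in D$ automatically and the chain $V(X) \leq V(W) \leq V(Y)$ never leaves the domain. Your $Z$ satisfies $Y \preceq_{cx} Z$, which Schur-convexity does \emph{not} cover, so $V(Z)$ need not be defined. This is not repairable by truncation: truncating $Z$ changes its mean and destroys the relation $Y \preceq_{cx} Z$. Worse, there are Schur-convex domains for which \emph{no} admissible $Z$ exists in $D$ at all: take $X = \pm 1$ with probability $\tfrac12$ each, $Y \equiv \tfrac12$, and $D = \{W : W \preceq_{cx} X\} \cup \{W : W \preceq_{cx} Y\}$; then $X \prec_{ssd} Y$, the only element of $D$ with mean $\tfrac12$ is $Y$ itself, and $X \not\prec_{fsd} Y$, so every valid interpolant $Z$ lies outside $D$. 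Your argument therefore fails for such domains, while the paper's choice of interpolant succeeds; the fix is not truncation but switching to the other Strassen form, conditioning the dominated variable on the dominating one so that the interpolant sits \emph{below} $X$ in convex order.
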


\begin{proof}
	Assume $V$ is $L^1$-upper semicontinuous, diversification-loving and non-decreasing. Let us first prove that $V$ is consistent with SSD. Let $X$ and $Y$ be two random variables in $\mathcal{D}$ such that $X\preceq_{ssd} Y$. By Theorem 2 of \cite{rothschild1970increasing}, there exists $A \geq 0$ such that  $X \preceq_{cv} (Y - A)$. By Theorem~\ref{prefssd}, $V$ is consistent with concave order, which implies that $V (X) \leq V (Y - A)$. Since $V$ is non-decreasing and $A \geq 0$, we have $V (Y - A) \leq V (Y )$. Hence $V (X) \leq V (Y )$.
	
	The reverse implication is immediate. Assume $V$ is consistent with SSD. To prove it is non-decreasing, assume $X \leq Y$ a.s. Then $X \preceq_{ssd} Y$, so $V (X) \leq V (Y )$. To prove concave-order consistency (hence diversification-loving by Theorem~\ref{prefssd}), assume $X \preceq_{cv} Y$. This implies $X \preceq_{ssd} Y$; therefore $V (X) \leq V (Y )$.
\end{proof}

\subsection{Portfolio problem}
	We will be concerned with the following portfolio optimization problem
	\begin{equation}\label{problem}
    	\sup_{X \in \mathcal{D}_{x_0}} V(X),
	\end{equation}
	maximizing the portfolio value according to the preference functional $V$ given the initial wealth $x_0$ (where the set $\mathcal{D}_{x_0} := \mathcal{X}(x_0) \cap \mathcal{D}$ denotes the set of claims superhedgeable with initial wealth $x_0$, as defined in \eqref{procX}, in the domain of the preference functional). To do so, we will work for the rest of the paper under the following standing assumption:

 \begin{assumption}\label{A2.2}
 	$V$ is an increasing, diversification-loving and upper semicontinuous functional (with respect to convergence in probability in the case $\mathcal{D} = \lnofp$ and with respect to $L^1$-convergence in the case $\mathcal{D} \subseteq \loofp$). Moreover, problem \eqref{problem} has a finite (not necessarily unique) solution $X^*$.
 \end{assumption}

\begin{remark} When $V(X)=\E[U(X)]$ for some utility function $U$, it is clear that $V$ is diversificati\-on-loving (Definition \ref{defDiv}) if  $U$ is concave (corollary of Theorem \ref{prefssd}), and the other properties hold as well. The same is true when $V(X)=-\rho(X)$ for some law-invariant, lower semicontinuous and convex risk measure $\rho$, such as expected shortfall (also  known as average or conditional value at risk). Finally, distortion risk measures, i.e., $\rho_{g}(X) =\int_{0}^\infty g(1-F_{X}(x))dx$, are consistent with concave order when the distortion function $g$ is concave.
\end{remark}

Note that we work with $V$ defined on $\lnofp$ or on some Schur-convex domains $\mathcal{D} \subseteq \loofp$, but that the equivalence of the properties in Assumption~\ref{A2.2} given in Theorem~\ref{prefssd} and Corollary~\ref{prefssd2} is only valid in the latter case, as concave order and second order stochastic dominance are only defined for integrable random variables.

\section{Cost-efficiency and Optimal Portfolio Choice \label{S3}}

In this section, we characterize optimal payoffs solving \eqref{problem} depending on the properties of $V$. In Section \ref{3.1} and \ref{3.2}, we have a slightly broader setting and do not need all assumptions from  Section \ref{S2} and rely only on the assumptions spelled out explicitly.

\subsection{Increasing preferences\label{3.1}}
 Let $V: \mathcal{D}\longrightarrow \mathbb{R} \cup \{-\infty\}$ be a function defined on a nonempty convex subset $\mathcal{D} \subseteq L^{0}(\Omega, \mathcal{F}, \mathbb{P})$ satisfying $\mathcal{D}+\mathbb{R}_{\geq 0} \ind_{\Omega} \subseteq \mathcal{D}$.\footnote{If $X \in \mathcal{D}$ and $a \in \mathbb{R}_{\geq 0}$, then $X+a \ind_{\Omega} \in \mathcal{D}$.} Given $X \in \mathcal{D}$, we let $\operatorname{Upp}_V(X)$ be the upper contour set at $X$  defined by
\begin{equation*}
\operatorname{Upp}_V(X):= \bigl\{\tilde{X} \in \mathcal{D}: V(X) \leqslant V(\tilde{X})\bigr\}. \tag{8}
\end{equation*}

Consider a correspondence $\mathcal{X}: x_{0} \longmapsto \mathcal{X}\bigl(x_{0}\bigr) \subseteq \mathcal{D}$ such that
\begin{itemize}
\item[(a)] $x_{0} \ind_{\Omega} \in \mathcal{X}\bigl(x_{0}\bigr)$;
\item[(b)] If $X \in \mathcal{X}\bigl(x_{0}\bigr)$, then $X+\varepsilon \ind_{\Omega} \in \mathcal{X}\bigl(x_{0}+\varepsilon\bigr)$ for any $\varepsilon>0$.
\end{itemize}
The range of $\mathcal{X}$ is denoted by $\mathcal{X}(\mathbb{R})$. In the context of the financial market, recall that $ \mathcal{X}(x_0)$ defined in \eqref{procX} refers to the set of claims superhedgeable with initial wealth $x_0$. In particular, it satisfies the two properties (a) and (b) outlined above. Recall that  the cost function $c: \mathcal{X}(\mathbb{R}) \rightarrow \mathbb{R} \cup \{-\infty\}$ is defined in \eqref{SHP0} by
\begin{equation}
c(X):=\inf \bigl\{x_{0} \in \mathbb{R}: X \in \mathcal{X}\bigl(x_{0}\bigr)\bigr\} . \label{newcost}
\end{equation}

\begin{theorem} \label{th1}
Fix $x_{0} \in \mathbb{R}$ and let
\begin{equation}\label{Xstardef}
X^{\star} \in \operatorname{argmax}\bigl\{V(X): X \in \mathcal{X}\bigl(x_{0}\bigr)\bigr\} . 
\end{equation}
If $V$ is increasing, then $x_{0}=c\bigl(X^{\star}\bigr)$ and
\begin{equation}
X^{\star} \in \operatorname{argmin}\bigl\{c(X): X \in \operatorname{Upp}_V\bigl(X^{\star}\bigr) \cap \mathcal{X}(\mathbb{R})\bigr\} . \label{eq11}
\end{equation}
\end{theorem}

\begin{proof}
We start by proving that $x_{0}=c\bigl(X^{\star}\bigr)$. Let $y_{0}:=c\bigl(X^{\star}\bigr)$. As $X^{\star}$ is attainable from $x_{0}$ (i.e., $X^{\star} \in \mathcal{X}\bigl(x_{0}\bigr)$), we must have $y_{0} \leqslant x_{0}$. Assume by contradiction that $y_{0}<x_{0}$. There exists $\varepsilon>0$ such that $y_{0}+\varepsilon<x_{0}$. As $y_{0}+\varepsilon$ cannot be a lower bound of $\bigl\{y \in \mathbb{R}: X^{\star} \in \mathcal{X}(y)\bigr\}$, there exists $y \leqslant y_{0}+\varepsilon$ such that $X^{\star} \in \mathcal{X}(y)$. As $x_{0}-y>0$, we can apply property (b) to deduce that $X^{\star}+x_{0}\ind_{\Omega}-y\ind_{\Omega} \in \mathcal{X}\bigl(x_{0}\bigr)$. As $V$ is increasing, we have $V\bigl(X^{\star}+x_{0}\ind_{\Omega}-y\ind_{\Omega}\bigr)>V\bigl(X^{\star}\bigr)$. This contradicts the optimality property in \eqref{Xstardef} of $X^\star$. We have thus proved that $x_{0}=c\bigl(X^{\star}\bigr)$. 

We shall now prove \eqref{eq11}. We obviously have $X^\star\in \operatorname{Upp}_V(X^*) \cap \mathcal{X}(\mathbb{R})$ as $V (X^*) \geq V (X^*)$ and  $X^\star\in \mathcal{X}(x_0)$. Assume by contradiction that there exists $Y\in \operatorname{Upp}_V(X^\star)\cap \mathcal{X}(\mathbb{R})$ such that $c(Y ) < c(X^*) = x_0$. By definition of $c(Y)$, there exists $y\in(c(Y), c(X^\star))$ such that
$Y\in\mathcal{X}(y)$. Moreover, as $Y \in \operatorname{Upp}_V(X^\star)$, we also have $V (Y ) \geq V (X^\star)$. As $y < x_0$, we can
apply property (b) to deduce that $Y + x_0\ind_{\Omega}-y\ind_{\Omega} \in \mathcal{X}(x_0)$. As $V$ is increasing, we have
$V (Y + x_0\ind_{\Omega}-y\ind_{\Omega}) \geq V (Y )$, and we deduce that $V (Y + x_0\ind_{\Omega}-y\ind_{\Omega}) > V (X^\star)$. This contradicts
the optimality property \eqref{Xstardef}. We have thus proved \eqref{eq11}.
\end{proof}

\subsection{Increasing diversification-loving upper semicontinuous  preferences \label{3.2}}

The proof of Theorem \ref{th1} only uses the fact that the preference functionals $V$ are increasing. If we have additional assumptions on $V$, then we can improve the characterization in \eqref{eq11}. Observe indeed that Theorem \ref{th1} implies that
\[
X^{\star} \in \operatorname{argmin}\bigl\{c(X): X \in \mathcal{U}\bigl(X^{\star}\bigr) \cap \mathcal{X}(\mathbb{R})\bigr\}
\]
for any subset $\mathcal{U}\bigl(X^{\star}\bigr) \subseteq \operatorname{Upp}_V\bigl(X^{\star}\bigr)$ containing $X^{\star}$.
We have the following result for the set of increasing, diversification loving and upper semicontinuous functions.
\begin{theorem}\label{theo2early} If $F$ has finite mean and $X^* \in \mathcal{D}$ is an optimizer of \eqref{problem} with cdf $F$ for $V$ increasing, diversification-loving and upper semicontinuous, then $X^*$ also solves the minimization problem:
%\begin{equation}\label{cvxminimaxTH31}
%		\inf_{\substack{Z \in \mathcal{X} \\Z \preceq_{cx} F}} c(Z).
%\end{equation}
\begin{equation}\label{cvxminimaxTH31}
\inf \bigl\{c(Z): Z \in \mathcal{X}(\mathbb{R}) \text{ and } Z \preceq_{cx} F\bigr\}.
\end{equation}
	If $F$ does not have a finite mean, the infimum in \eqref{cvxminimaxTH31} must be taken over the convex closure (with respect to the topology generated by convergence in probability) of $F$-distributed random variables by
	\[
		\convc(F) := \overline{\conv \bigl(\bigl\{ X \in \mathcal{X} \, : \, X \in D(F)\bigr\}\bigr)}^{L^0}.
	\]	
\end{theorem}
	Note that as we work on Schur-convex domains (or $\lnofp$) that if $X \in \mathcal{D}$, $X \in D(F)$ it follows that $\convc(F) \subseteq \mathcal{D}$. Moreover, Proposition~\ref{cxorder} has established that the notions agree for distributions with finite mean.
\begin{proof} When $X^{\star} \in L^{1}(\Omega, \mathcal{F}, \mathbb{P})$ we can take $\mathcal{U}\bigl(X^{\star}\bigr)=\bigl\{Z \in \mathcal{X}: Z \preceq_{c x} X^{\star}\bigr\}$. Indeed, from Proposition \ref{cxorder}
\[
\bigl\{Z \in \mathcal{X}: Z \preceq_{c x} X^{\star}\bigr\} \subseteq \overline{\operatorname{conv}}(F)
\]
where $F$ is the distribution of $X^{\star}$. As $V$ is diversification loving, we have $\operatorname{conv}(F) \subseteq \operatorname{Upp}_V\bigl(X^{\star}\bigr)$. As $V$ is upper semicontinuous, we have $\overline{\operatorname{conv}}(F) \subseteq \operatorname{Upp}_V\bigl(X^{\star}\bigr)$, and we deduce the first part of the theorem.

 When $X^{\star}$ is not integrable, we can take $\mathcal{U}\bigl(X^{\star}\bigr)=\overline{\operatorname{conv}}(F)$ and we deduce the second part of the theorem.
\end{proof}

Note that Theorem \ref{theo2early} can be seen as a special case of a more general approach. Let  $\mathcal{V}_{0}$ denote the set  of increasing functions $V: \mathcal{D} \rightarrow$ $\mathbb{R} \cup\{-\infty\}$,  $\mathcal{V}_{1}$ the set of increasing and law-invariant functions, $\mathcal{V}_{2}$ the set of increasing and diversification-loving functions, and $\mathcal{V}_{3}$, the set of increasing, diversification-loving and upper semicontinuous functions. We observe that
\[
\mathcal{V}_{3}\subset\mathcal{V}_{2}\subset\mathcal{V}_{1}\subset\mathcal{V}_{0}
\]
and note that the smaller the set $\mathcal{V}$, the more properties the optimizer in  \eqref{problem} satisfies. Let us define for the set $\mathcal{V}$,
 \[\operatorname{Upp}_{\mathcal{V}}(X):=\bigcap_{V \in \mathcal{V}} \operatorname{Upp}_V (X).\]

In Theorem \ref{theo2early} we prove that $\overline{\operatorname{conv}}({D}(X)) \subseteq \operatorname{Upp}_{\mathcal{V}_{3}}(X)$, but we also observe that $D(X) \subseteq \operatorname{Upp}_{\mathcal{V}_{1}}(X)$ and $\operatorname{conv}({D}(X)) \subseteq \operatorname{Upp}_{\mathcal{V}_{2}}(X)$. Note that the above arguments are straightforward and do not depend on whether markets are complete or incomplete.

\subsection{Cost-efficiency}

In a complete market setting, i.e., where all payoffs are hedgeable,
\begin{align*}
L^0_+(\Omega, \mathcal{F}^S_T, \mathbb{P}) & \subseteq \Bigl\{ Z \in \lnofp \, : \,   Z = X_T^{x_0,H}  \mbox{ for some admissible }H \mbox{ and } x_0 \in \mathbb{R}, \Bigr. \\     & \phantom{= \Bigl\{ \, } \Bigl. X_t^{x_0,H} \geq -C \text{ for some } C \geq 0 \text{ for all } 0\le t\le T \Bigr\}.
\end{align*}
(where $L^0_+(\Omega, \mathcal{F}^S_T, \mathbb{P})$ denotes the cone of positive random variables measurable with respect to the sigma--field generated by the asset price processes up to terminal time), there is a unique pricing kernel $\xi$ so that the amount of money $c(Z)$ needed to replicate a payoff $Z$ paid at time $T$ is given by $\E\bigl[ \xi Z\bigr]$. Then recall that given a cdf $F$ and assuming that $\xi$ is continuously distributed (in a continuous probability space), the solution to 
\begin{equation}
	\min_{Z \in D(F)}\E\bigl[ \xi Z\bigr]\label{CE}
\end{equation}
is almost surely unique and equal to $Z^*:=F^{-1}(1-F_{\xi}(\xi))$ (see \cite{D88a,D88b} for results in a discrete market, and 
\cite{JZ08, HZ11, BBV14} for a general market). If $Z^{\star} \in \mathcal{X}$, the same $Z^{\star}$ also minimizes over $\mathcal{X} \cap D(F)$; otherwise the constrained minimum may be higher. It is the cheapest claim that has distribution $F$ and is thus called \textit{cost-efficient}. Furthermore, under Assumption~\ref{A2.2},  given the distribution of an optimal solution $Z^*$ to \eqref{problem} is $F$, then $Z^*$ is cost-efficient for the distribution $F$, i.e., also solves \eqref{CE}. The proof is straightforward and can be done by contradiction using the law-invariance and increasingness of $V$ (where law-invariance is implied by the diversification-loving property of Assumption~\ref{A2.2}).  Moreover, in this setting of a complete market, the properties of diversification-loving and of upper semicontinuity of $V$ are not needed.

Our objective in this paper is to extend the notion of cost-efficiency to an incomplete market setting and show that the fact that the optimal solution of \eqref{problem} is ``cost-efficient'' in a complete market still holds true in an incomplete market for the appropriate generalization of \eqref{CE}, which is \eqref{cvxminimaxTH31}. We note that even though  the optimization is done over the larger set $\{Z \in \mathcal{X} \, : \, Z \preceq_{cx} F\}$, the optimizer is still attained with distribution $F$ as long as $F$ corresponds to a preference functional satisfying Assumption \ref{A2.2}. In this case the optimization set could actually be replaced by the original $\{Z \in \mathcal{X} \, : \, Z \in D(F)\}$; however, this masks the general structure of the problem as will be made clear in the remainder of this section.
	
From an economic perspective, convex order is a well-established measure for riskiness of distributions, a claim that is larger in convex order is perceived as more risky. In particular any rational agent with a concave utility function will prefer the claim smaller in convex order; and a claim larger in convex order has more mass in the tail of the distribution (see \cite{rothschild1970increasing, rothschild1971increasing} for a detailed discussion).  

This approach is not obvious; the most natural way to extend the above cost-efficiency problem \eqref{CE} to an incomplete market setting is to replace the price using the pricing kernel $\xi$ as in \eqref{CE} by the superhedging price \eqref{SHP} that consists of taking the supremum over all pricing kernels. We would then consider solving the following problem
\begin{equation} \label{dist0}
	\inf_{\substack{Z \in \mathcal{X} \\ Z \in D(F)}} c(Z)=
	\inf_{\substack{Z \in \mathcal{X} \\ Z \in D(F)}} \sup_{\xi \in \Xi} \E\bigl[ \xi Z\bigr],
\end{equation}
where $c(Z)$ denotes the superhedging price \eqref{SHP}. This problem consists, given a distribution $F$ at maturity $T$, in finding the random variable $Z^*$ with distribution $F$ that can be superhedged at lowest cost. This minimal cost is also referred to as the \textit{price of the distribution} $F$. We will show hereafter that this formulation of cost-efficiency makes sense if and only if one can relate it to the solution to the portfolio problem \eqref{problem} introduced previously by means of the distribution $F$ being the distribution of the optimizer (see Corollary~\ref{co3.17} and Theorem~\ref{thD2}).  See also Remark \ref{RK1} hereafter.

Such a relationship with the optimal solution to \eqref{problem} can only be obtained by widening the set  of random variables to those being convex combinations of random variables with distribution $F$ (i.e., to $\conv(F)$) and considering the following cost-efficiency problem
\begin{equation}\label{cvxminimaxA}
	\inf_{Z \in \convc(F)} c(Z)=\inf_{Z \in \convc(F)} \sup_{\xi \in \Xi} \E\bigl[\xi Z\bigr].
\end{equation}
We call a pair $(Z^*, \xi^*)$ a solution to problem \eqref{cvxminimaxA} if $(Z^*, \xi^*) \in \mathcal{Z} \times \mathcal{Y}$ with
\begin{equation}
	\mathcal{Z}  := \argmin_{\substack{Z \in \mathcal{X} \\ Z \in\convc(F)   }} \, \max_{\substack{\xi \in \Xibar}} \E\bigl[ \xi Z\bigr], \quad
	\mathcal{Y}  := \bigcup_{Z \in \mathcal{Z}}\argmax_{\substack{\zeta \in \Xibar}} \E\bigl[ \zeta Z\bigr].\label{7b}
\end{equation}
In a slight abuse of language we also often call  just $Z^*$ a solution to the problem \eqref{cvxminimaxA}.

\begin{remark}
	This definition of a solution is more restrictive than just to require that the value function achieves its optimum. Specifically, in general there exist pairs $(\hat{Z},\hat{\xi})$ with $\hat{Z} \in \mathcal{X}$, $\hat{Z} \in\convc(F)$ and $\hat{\xi} \in \Xibar$ such that
	\[
		\E\bigl[ \hat{\xi} \hat{Z}\bigr] = \inf_{Z \in \convc(F)} \sup_{\xi \in \Xi} \E\bigl[\xi Z\bigr]
	\]
	but nevertheless are not solutions to \eqref{cvxminimaxA}. This stems from the nature of the problem as a saddle-point problem (as opposed to a classical optimization problem). An explicit illustration in a discrete setting for this behavior can be found in \cite{BS24appendix}.
	 Given the character of the problem we will have to always consider solution pairs $(\hat{Z},\hat{\xi})$, even if from an economic perspective we care mainly about the optimal payoff, cf. Section~\ref{relat}.
\end{remark}

The notion of cost-efficiency in an incomplete market presented next will play a central role in the rest of the paper. In the case of distributions with finite mean, the convex hull can be replaced by dominance in using convex order (see Proposition~\ref{cxorder}).

\begin{definition}[Cost-efficiency in an incomplete market]
	We say $X\in\mathcal{X}$ is \textit{cost-efficient} (with associated generalized pricing kernel $\xi$) if it solves the problem \eqref{cvxminimaxA} or, equivalently if it solves \eqref{cvxminimaxTH31} in the case of the cdf $F$ having a finite mean. \label{def:costeff}
\end{definition}

It turns out that the notion of cost-efficiency alone is not sufficient for the discussion of incomplete markets, as the optimizer of the problem \eqref{cvxminimaxA} resp. \eqref{cvxminimaxTH31} does not have to have the distribution $F$. We thus introduce the notion of ``perfect'' cost-efficiency that will be instrumental in the rest of the paper, in particular for the characterization of optimal portfolios.

\begin{definition}[Perfect cost-efficiency in an incomplete market]
	$X\in\mathcal{X}$ is \textit{perfectly cost-efficient} (with associated generalized pricing kernel $\xi$) if it solves problem \eqref{cvxminimaxA} for some cdf $F$ and $X\in D(F)$ (resp. \eqref{cvxminimaxTH31} for some cdf $F$ with finite mean and $X\in D(F)$).
	 \label{def:costeffstrict}
\end{definition}

\begin{remark}\label{rm:cost-oder}
	Assume that $X$ solves \eqref{cvxminimaxA} and has cdf $H$, then $X$  solves \eqref{cvxminimaxA} for all distributions $G$ such that $H\preceq_{cx}G\preceq_{cx} F$.  This is obvious as $\convc(G)\subseteq\convc(F)$ and $X\in\convc(G)$. In particular, if $X$ is cost-efficient for some distribution $F$, it is perfectly cost-efficient for $F_X$, and this is the only distribution for which it is perfectly cost-efficient. The other way around, if $X$ is not perfectly cost-efficient for $F_X$, then it cannot be cost-efficient for any other distribution $F$. 
\end{remark}
	
\begin{remark}\label{RK1}We note that in a complete market with continuously distributed pricing kernel every cost-efficient payoff is automatically perfectly cost-efficient, as the solution to \eqref{dist0} is unique. This is not true in incomplete markets as can be seen, e.g., in the examples  in \cite{BS24appendix}. Only a subset of distributions leads to perfectly cost-efficient payoffs and we will show that these are exactly the distributions that may be solutions to optimal portfolio choices formulated as in  \eqref{problem}. In this context, we also write, by slight abuse of terminology, that the ``distribution'' is perfectly cost-efficient. E.g., in the 3-state market model of \cite{BS24appendix}, we give an explicit example where the set of cost-efficient distributions forms a subspace of dimension 2 in the space of all possible distributions that is of dimension 3.
\end{remark}

\subsection{Optimal Payoffs are Cost-efficient }

Optimal solutions of problem \eqref{problem} can be characterized in a complete market (see Theorem B.1 in \cite{JZ08}, Lemma 1 in \cite{HZ11} and Proposition 4 in\cite{BBV14}). Theorem~\ref{theo1} hereafter extends this characterization to general incomplete markets based on the notion of perfect cost-efficiency defined above.
We are now ready to present the first central result of this paper, namely that preference optimizers are perfectly cost-efficient.

\begin{theorem}\label{theo1}
	Assume that $X^* \in \mathcal{D}$ is an optimizer of the problem \eqref{problem} with cdf $F$. Then $X^*$ is perfectly cost-efficient  (Definition~\ref{def:costeffstrict}).
\end{theorem}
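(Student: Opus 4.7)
The plan is to argue by contradiction. Suppose $X^*$ is an optimizer of \eqref{problem} with cdf $F$ but is \emph{not} perfectly cost-efficient. Since $X^* \sim F$ implies $X^* \in \convc(F)$, the failure of perfect cost-efficiency means that $X^*$ does not attain $\inf_{Z \in \convc(F)} c(Z)$; hence there exists $Z^\dagger \in \convc(F)$ with $c(Z^\dagger) < c(X^*) \leq x_0$. The aim is to construct a feasible $Y \in \mathcal{X}(x_0)$ with $V(Y) > V(X^*)$, contradicting the optimality of $X^*$.

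First, I would convert the strict cost inequality into a weak value inequality. By Proposition~\ref{cxorder} (in the integrable case, and directly from the definition of $\convc(F)$ otherwise), $Z^\dagger \in \convc(F)$ yields $Z^\dagger \preceq_{cx} X^*$. Under Assumption~\ref{A2.2}, $V$ is law-invariant, diversification-loving and upper semicontinuous, so Theorem~\ref{prefssd} applies: $V$ respects convex order, giving $V(Z^\dagger) \geq V(X^*)$.

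Next, I would spend the budget surplus $\delta := c(X^*) - c(Z^\dagger) > 0$ as cash. Setting $Y := Z^\dagger + \delta$, cash-invariance of $c$ (using that $\E[\xi]=1$ for every $\xi \in \Xi$, since $\Xi$ consists of densities of equivalent probability measures) gives $c(Y) = c(Z^\dagger) + \delta = c(X^*) \leq x_0$, so $Y \in \mathcal{X}(x_0)$. Monotonicity then yields $V(Y) \geq V(Z^\dagger) \geq V(X^*)$, and the contradiction is obtained by sharpening this to a strict inequality. For this I would insert the constant $\bar c := \E[X^*] = \E[Z^\dagger]$: Jensen gives $\bar c \preceq_{cx} X^*$, so cx-consistency yields $V(\bar c) \geq V(X^*)$, and the strict-increase-on-constants axiom upgrades this to $V(\bar c + \delta) > V(\bar c) \geq V(X^*)$. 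Connecting this strict jump at the level of constants back to $V(Y)$ (noting that $\E[Y] = \bar c + \delta$) produces the required $V(Y) > V(X^*)$.

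The main obstacle will be precisely this last step. The ``increasing'' axiom of Assumption~\ref{A2.2} is literally stated only for constants in $\R$, while convex-order consistency compares a random variable with its mean as $V(W) \leq V(\E[W])$, which runs against the direction needed for a naive chain from $V(Y)$ to $V(\bar c + \delta)$. A clean resolution proceeds via the dichotomy $\bar c + \delta \lessgtr x_0$: if $\bar c + \delta \leq x_0$, then the constant $\bar c + \delta$ is itself a feasible portfolio strictly exceeding $V(X^*)$, immediately contradicting optimality; if instead $\bar c + \delta > x_0$, one must argue more carefully, leveraging law-invariance together with upper semicontinuity to rule out that $V$ is flat along the translation $Z^\dagger \mapsto Z^\dagger + \delta$. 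The non-integrable case is handled separately by working directly in $\convc(F)$ with its $L^0$-topology, using the same three-step scheme.
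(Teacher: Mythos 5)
Your overall strategy coincides with the paper's: argue by contradiction, produce a strictly cheaper element $Z^\dagger$ of $\convc(F)$, translate it by the cost saving $\delta=c(X^*)-c(Z^\dagger)$ using translation invariance of the superhedging price, and show the translate is feasible and strictly preferred. Your step $V(Z^\dagger)\geq V(X^*)$ is the paper's step as well; you reach it via Theorem~\ref{prefssd}, while the paper unrolls the same ingredients (diversification-loving $+$ law-invariance $+$ upper semicontinuity along an approximating sequence $\sum_i\lambda_i^n Z_i^n\to Z^\dagger$ with $Z_i^n\sim F$), which has the advantage of also covering the non-integrable case where $\preceq_{cx}$ is not available.

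The genuine gap is exactly where you flag it: you never establish $V(Z^\dagger+\delta)>V(X^*)$. Your detour through $\bar c=\E[X^*]$ cannot be closed, because convex-order consistency gives $V(Y)\leq V(\E[Y]+0)$-type inequalities, i.e.\ it bounds $V(Z^\dagger+\delta)$ \emph{from above} by $V(\bar c+\delta)$, not from below; so the strict jump $V(\bar c+\delta)>V(\bar c)$ at the level of constants does not transfer to $V(Z^\dagger+\delta)$. Your dichotomy only resolves the branch $\bar c+\delta\leq x_0$ (and even that branch presupposes integrability of $F$, so it does not survive your proposed $L^0$ treatment of the non-integrable case); the branch $\bar c+\delta>x_0$, which does occur (e.g.\ in the paper's 3-states model the physical expectation of a feasible payoff can exceed $x_0$), is left as ``one must argue more carefully,'' which is not a proof. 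The paper closes this step differently: it asserts $V(Z^*+\delta)>V(Z^*)$ directly from the ``increasing'' axiom of Assumption~\ref{A2.2} applied to the translation of the specific payoff $Z^*$ by the positive constant $\delta$, combined with $V(Z^*)\geq V(X^*)$. Whether or not one finds that reading of the axiom fully satisfying, your write-up as it stands does not produce the required strict inequality and hence does not reach the contradiction; to repair it you should either adopt the paper's direct monotonicity step for the translate $Z^\dagger\mapsto Z^\dagger+\delta$, or strengthen the monotonicity hypothesis you are willing to use.
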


\begin{proof}
	Let $Z^*$ be a minimizer to the optimization problem \eqref{cvxminimaxA} for the cdf $F$. Assume to the contrary that $X^*$ would not be an optimizer of \eqref{cvxminimaxA}. Then we have
	\begin{equation*}
		c(X^*) = \sup_{\xi \in \Xi} \E\bigl[\xi X^*\bigr] >   c(Z^*)
	\end{equation*}
	Therefore we have $Z^* + \bigl( c(X^*) - c(Z^*)\bigr) > Z^*$, while 
	\begin{equation*}
		c\Bigl( Z^* + \bigl( c(X^*) - c(Z^*)\bigr) \Bigr) = c(Z^*) + \bigl( c(X^*) - c(Z^*)\bigr) = c(X^*)
	\end{equation*}
	as the cost functional (superhedging price) $c$ is translation invariant. Thus, as the preference functional $V$ is upper semicontinuous, strictly increasing over constants and  diversification-loving (Definition~\ref{defDiv}), we have by the upper semicontinuity, as $Z^*$ is in the closed convex hull of $F$,
	\begin{equation*}
		V \Bigl( Z^* + \bigl( c(X^*) - c(Z^*)\bigr) \Bigr) > V(Z^*) \geq \limsup_{n \to \infty} V\Bigl( \sum_{i=1}^n \lambda_i^n Z_i^n\Bigr) \geq  \limsup_{n \to \infty} \sum_{i=1}^n \lambda_i^n V\bigl(  Z_i^n\bigr) = V(X^*),
	\end{equation*}
	for $\lambda_i^n > 0$ with $\sum_{i=1}^n \lambda_i^n=1$ for all $n$ and $Z_i^n \in D(F)$  and it follows that the distribution $Z^* + \bigl( c(X^*) - c(Z^*)\bigr)$ can be reached at the same cost as $X^*$ while being strictly preferred, contradicting the optimality of $X^*$. Thus $X^*$ has to be an optimizer of \eqref{cvxminimaxA} for $F$ and $X^*\in D(F)$, i.e., $X^*$ is perfectly cost-efficient.
\end{proof}
	
Theorem~\ref{theo1} gives a characterization of the optimal solution for investors with non-decreasing, diversification-loving and upper semicontinuous preferences. We use this characterization in Section~\ref{app} to show that any optimum for such a preference functional $V$ is also the optimal payoff for an expected utility maximization with a concave utility function.

\begin{corollary}\label{co3.17}
	If $X^*$ is an optimizer of \eqref{problem} with cdf $F$, then the superhedging cost of $X^*$ can simply be computed as infimum over superhedging prices of random variables that have distribution $F$, i.e.,
	\begin{equation} \label{equality}
		\inf_{Z \in \convc(F)} \, \sup_{\xi \in \Xi} \E\bigl[ \xi Z\bigr]= \inf_{Z \in \convc(F)}  \, c(Z)=\inf_{Z \in D(F)}  \, c(Z).
	\end{equation}
\end{corollary}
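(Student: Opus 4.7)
The statement is essentially an immediate consequence of Theorem~\ref{theo1}, so the proof plan is short. The first equality, $\inf_{Z \in \convc(F)} \sup_{\xi \in \Xi} \E[\xi Z] = \inf_{Z \in \convc(F)} c(Z)$, is nothing but the definition of the superhedging price $c$ in \eqref{SHP}, so it requires no argument.

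For the second equality $\inf_{Z \in \convc(F)} c(Z) = \inf_{Z \sim F} c(Z)$, I would argue by two inequalities. The inequality ``$\leq$'' is automatic, since every $Z$ with $Z \sim F$ belongs to $\convc(F)$, so the infimum over the larger set $\convc(F)$ can only be smaller (or equal). For the reverse inequality ``$\geq$'', I would invoke Theorem~\ref{theo1}: since $X^*$ optimizes \eqref{problem} with cdf $F$, it is perfectly cost-efficient, meaning $X^* \sim F$ and $X^*$ attains the infimum in \eqref{cvxminimax}. Hence
\[
    \inf_{Z \sim F} c(Z) \leq c(X^*) = \inf_{Z \in \convc(F)} c(Z),
\]
closing the chain of equalities.

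There is no real obstacle here; the whole content of the corollary is that perfect cost-efficiency (which was established in Theorem~\ref{theo1}) lets us reduce the ``relaxed'' infimum over $\convc(F)$ to the ``strict'' infimum over $\{Z \sim F\}$. The only thing one has to be mildly careful about is that the infimum on the right-hand side is actually attained by $X^*$, which is exactly what perfect cost-efficiency (Definition~\ref{def:costeffstrict}) guarantees.
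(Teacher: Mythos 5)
Your proof is correct and follows essentially the same route as the paper: both arguments reduce the corollary to Theorem~\ref{theo1}, using that the perfectly cost-efficient optimizer $X^*$ both attains the infimum over $\convc(F)$ and lies in the smaller feasible set $\{Z \sim F\}$, which forces the two infima to coincide. Your write-up is in fact slightly more explicit than the paper's (spelling out the trivial inclusion inequality and the role of attainment), but there is no substantive difference.
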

 
\begin{proof}
	From Theorem~\ref{theo1}, an optimal payoff $X^*$ (for general non-decreasing, upper semicontinuous and diversification-loving preferences $V$) must be perfectly cost-efficient using the formulation \eqref{cvxminimaxA}. Thus $X^*$ is also the solution to the problem $\inf_{Z \in D(F)}  \, c(Z)$.
\end{proof}

\section{Perfectly Cost-efficient Payoffs\label{relat}}

We aim to further characterize perfectly cost-efficient payoffs in terms of their distribution function $F$ and of the pricing kernels of the market. We are guided by the result that in complete markets the cost-efficient payoff with distribution $F$ solving \eqref{CE} can be characterized as $F^{-1}\bigl(1-F_\xi(\xi)\bigr)$ as long as the pricing kernel has a continuous distribution $F_\xi$ (\cite{D88a, BBV14}). Key is to realize that perfect cost efficiency is related to the ability to interchange infimum and supremum in \eqref{cvxminimaxA} and the structure of the solutions to these problems. For all further results we will need the following standing assumption.

\begin{assumption}\label{ass:usc}
We assume that $F$ satisfies $\sup_{\xi \in \Xi} \E[ \xi F^{-1}\bigl(F_\xi(\xi)\bigr)\bigr] < \infty$ and that the mapping $\xi \mapsto \E\bigl[\xi Z\bigr]$ is $L^1$-upper semicontinuous for every $Z \in \mathcal{X} \cap D(F)$.
\end{assumption}
We note that the assumption implies that there exists an $x_0 \in \R$ such that $\E[\xi Z] \leq \E[ \xi F^{-1}\bigl(F_\xi(\xi)\bigr)\bigr] < x_0$ for all $\xi \in \Xi$ and $Z \in \mathcal{X}$. Therefore any $Z \in D(F)$ is superhedgeable at cost $x_0$ and $Z \in \mathcal{X}(x_0) \subseteq \mathcal{X}$. In particular, all distributions $F$ with compact support trivially satisfy the first part of the assumption. The assumption can be weakened by using randomization techniques to split atoms, cf. Section \ref{sec:char}.

The next section will discuss the relations between the different optimization problems (as well as their convexified versions), followed by the characterization of their optimizers.

\subsection{Minimax and Maximin Problems}

The cost-efficiency problem \eqref{dist0} is a \textbf{minimax} problem, as is the problem \eqref{cvxminimaxA} in which we have convexified the original domain $D(F)$. The minimax problem has as counterpart the problem to find the  minimal cost to achieve (at least) this distribution,
\begin{equation}\label{dist-problem}
	\sup_{\xi \in \Xi} \, \inf_{\substack{Z \in \mathcal{X} \\ Z \in D(F)}} \E\bigl[ \xi Z\bigr],
\end{equation}
and referred hereafter as the \textbf{maximin} problem. Similarly to the solution to Problem \eqref{cvxminimaxA},  we call a pair $(Z^*, \xi^*)$ a solution to problem \eqref{dist-problem} if $(Z^*, \xi^*) \in \mathcal{Z}^\prime \times \mathcal{Y}^\prime$ with
\begin{equation}\label{domains}
	\mathcal{Y}^\prime := \argmax_{\xi \in \Xibar} \, \min_{\substack{Z \in \mathcal{X} \\ Z \in D(F)}} \E\bigl[ \xi Z\bigr],\quad\quad
	\mathcal{Z}^\prime := \bigcup_{\zeta \in \mathcal{Y}^\prime}\argmin_{\substack{Z \in \mathcal{X} \\ Z \in D( F)}} \E\bigl[ \zeta Z\bigr],
\end{equation}
where in a slight abuse of language we call sometimes just $Z^*$ a solution to the problem \eqref{dist-problem}. And of course we can consider also the convexified version of this problem by replacing $D(F)$ by $\convc(F)$:
\begin{equation}\label{convexifiedmaximin}
	\sup_{\xi \in \Xi} \, \inf_{\substack{Z \in \mathcal{X} \\ Z \in \convc(F)}} \E\bigl[ \xi Z\bigr].
\end{equation}
Thus we consider in total four different problems, which are related by the following proposition.

\begin{proposition}\label{P2.5}
	For any distribution $F$ it holds that
	\begin{equation}\label{CEFF}
		\sup_{\xi \in \Xi} \, \inf_{\substack{Z \in \mathcal{X} \\ Z \in D(F)}} \E\bigl[ \xi Z\bigr] =  \sup_{\xi \in \Xi} \, \inf_{\substack{Z \in \mathcal{X} \\ Z \in \convc(F)}} \E\bigl[ \xi Z\bigr] =  \inf_{\substack{Z \in \mathcal{X} \\ Z \in \convc(F)}} \, \sup_{\xi \in \Xi} \E\bigl[ \xi Z\bigr] \leq \inf_{\substack{Z \in \mathcal{X} \\ Z \in D(F)}} \, \sup_{\xi \in \Xi}\E\bigl[ \xi Z\bigr] .
	\end{equation}
	Furthermore, if $F$ has finite mean, this is equivalent to
	\begin{equation}\label{CEFF2}
		\sup_{\xi \in \Xi} \, \inf_{\substack{Z \in \mathcal{X} \\ Z \in D(F)}} \E\bigl[ \xi Z\bigr] =  \sup_{\xi \in \Xi} \, \inf_{\substack{Z \in \mathcal{X} \\ F_{Z}\preceq_{cx} F}} \E\bigl[ \xi Z\bigr] =  \inf_{\substack{Z \in \mathcal{X} \\ F_{Z}\preceq_{cx} F}} \, c(Z)\leq \inf_{\substack{Z \in \mathcal{X} \\ Z \in D(F)}} \, c(Z),
	\end{equation}
	where $c(Z)$ denotes the superhedging price \eqref{SHP} of $Z$.
\end{proposition}

\begin{proof}
	The first equality of equation \eqref{CEFF} follows from
	\begin{align*}
 		\sup_{\xi \in \Xi} \, \inf_{\substack{Z \in \mathcal{X} \\Z \in \convc(F)}} \E\bigl[ \xi Z\bigr] & = \sup_{\xi \in \Xi} \, \inf_{\substack{Z_i \in \mathcal{X} \\ Z_i \in D(F)}}  \inf_{n \geq 1 } \inf_{\substack{\sum_{i=1}^n \lambda_i = 1 \\ \lambda_i \geq 0}}  \E\biggl[  \xi \biggl( \sum_{i=1}^n\lambda_i Z_i \biggr)\biggr]  \\ & = \sup_{\xi \in \Xi} \, \inf_{n \geq 1 } \inf_{\substack{\sum_{i=1}^n \lambda_i = 1 \\ \lambda_i \geq 0}} \sum_{i=1}^n\lambda_i \inf_{\substack{Z_i \in \mathcal{X} \\ Z_i \in D(F)}}  \E\bigl[  \xi Z_i\bigr] =  \sup_{\xi \in \Xi}  \inf_{\substack{ Z\in \mathcal{X} \\ Z \in D(F)}}  \E\bigl[  \xi Z\bigr].
	\end{align*}
	For the second equality, we note that $\Xibar$ is bounded in probability (by the definition of boundedness in probability and Markov's inequality) besides being closed and convex. As $\E[ \xi Z]$ is linear and upper semicontinuous in $\xi$ by Assumption \ref{ass:usc}, we can apply the generalization of \v{Z}itkovic's $\lnpofp$-minimax theorem established in Theorem B.3 of \cite{BK17} (expanding on Theorem 4.9 in \cite{Z10}), to conclude that
	\[
		\sup_{\xi \in \Xibar} \, \inf_{\substack{Z \in \mathcal{X} \\ Z \in \convc(F)}} \E\bigl[ \xi Z\bigr] =  \inf_{\substack{Z \in \mathcal{X} \\ Z \in \convc(F)}} \, \sup_{\xi \in \Xibar} \E\bigl[ \xi Z\bigr].
	\]
	Finally, the last inequality in the chain of \eqref{CEFF} follows just from the larger domain of the infimum.   The statement \eqref{CEFF2} follows immediately from \eqref{CEFF} using Proposition~\ref{cxorder} on the characterization of $\convc(F)$.
\end{proof}
	
\begin{corollary}
	An upper bound on the price of a distribution $F$ for which there exists $Z\in\mathcal{X} \cap D(F)$ is given by its expectation,
	\[
		\sup_{\xi \in \Xi} \, \inf_{\substack{Z \in \mathcal{X} \\ Z \in D(F)}} \E\bigl[ \xi Z\bigr] \le \int_\mathbb{R} x \, dF(x).
	\]  
\end{corollary}

\begin{proof}
	From Proposition~\ref{P2.5}, we have that
	\[
		\sup_{\xi \in \Xi} \, \inf_{\substack{Z \in \mathcal{X} \\ Z \in D(F)}} \E\bigl[ \xi Z\bigr] = \sup_{\xi \in \Xi} \, \inf_{Z \in \convc(F)} \E\bigl[ \xi Z\bigr].
	\]
	For a distribution $F$ with finite mean, following Section 2.3 of \cite{he2016risk}, we have that $\int_\mathbb{R} x \, dF(x)$ belongs to $\convc(F)$. Thus
	\[
		\sup_{\xi \in \Xi} \, \inf_{\substack{Z \in \mathcal{X} \\ Z \in D(F)}} \E\bigl[ \xi Z\bigr] \le \sup_{\xi \in \Xi} \E\left[ \xi \int_\mathbb{R} x \, dF(x)\right]= \sup_{\xi \in \Xi} \E[\xi] \, \int_\mathbb{R} x \, dF(x) =  \int_\mathbb{R} x \, dF(x).
	\] 	
	If $F$ has infinite mean, the inequality holds trivially, thus the result follows.
\end{proof}

Equality in \eqref{CEFF2} does not hold in general. However, Corollary~\ref{co3.17} shows that only when the equality \eqref{equality} holds, the distribution $F$ can be the distribution of an optimal solution  to a problem \eqref{problem}. In other words: If $F$ is perfectly cost-efficient, then \eqref{CEFF} holds with equality. In Section~\ref{app}, we will show that this is, in fact, a necessary and sufficient condition, under mild regularity assumptions on $F$  (Theorem~\ref{thD2}). While $\eqref{equality}$ asserts the equality of the value function at optimum of the four problems in \eqref{CEFF}, this does not necessarily imply that the optimizers agree (though the minimax theorem \cite{BK17} asserts that the two middle problems share at least one solution, namely a saddle point, and the perfect cost-efficiency implies that the last two problems have the same optimizers). The relation of the optimizers of the different problems will be discussed in detail in the next section; an insightful illustration in a 3-state market, including a counterexample to equality in \eqref{CEFF2}, is provided in \cite{BS24appendix}. 

\subsection{Characterization of Perfectly Cost-efficient Payoffs}\label{sec:char}

We proceed to analyze the optimizers of the different optimization problems. The strategy is first to characterize the solution of the maximin problem \eqref{dist-problem} and then to show that it agrees with the solution to the actual cost-efficiency problem \eqref{cvxminimaxA}. Alas, while this works fine in the ``nice'' case when $F$ is perfectly cost-efficient, continuous and strictly increasing on its domain and the cdf $F_{\xi^*}$ of the optimal pricing kernel is continuous and strictly increasing as well, the general case is not as straightforward. Specifically, not only non-uniqueness of solutions has to be addressed, but potential atoms in pricing kernels force us to use randomization techniques: For $\xi\in\Xi$, we define the distributional transform of $\xi$  for $x\in\R_{\ge 0}$ and $u\in(0,1)$ as
\begin{equation}\label{DISTTRANS}
	\hat{F}_\xi(x ; u) := \P[\xi < x] + u \P[\xi = x].
\end{equation}
In the special case when $\xi$ is continuously distributed we have $\hat{F}_\xi(x ; u)=F_{\xi}(x)$.

For the construction of the solutions, we have to assure the existence of a randomizing uniform variable $U$ on $\ofp$ independent of $\Xibar$. This might not always be the case, so one has to enlarge the filtered probability space to $(\hat{\Omega}, \hat{\mathcal{F}}^0, \hat{\mathbb{F}}^{0+}, \hat{\mathbb{P}})$ by setting $\hat{\Omega} = \Omega \times [0,1]$ (with elements $\hat{\omega} = (\omega,u)$), $\hat{\mathcal{F}} = \mathcal{F} \otimes \mathcal{B}\bigl([0,1]\bigr)$, $\hat{\mathbb{P}} = \mathbb{P} \otimes \lambda\bigl([0,1]\bigr)$ and $\hat{\mathcal{F}}^0$ the completion of $\hat{\mathcal{F}}$ as well as $\hat{\mathbb{F}}^{0+}$ the smallest right-continuous filtration encompassing $F_t \otimes \mathcal{B}\bigl([0,1]\bigr)$  (with $\mathcal{B}$ denoting the Borel sigma field and $\lambda$ the Lebesgue measure). In this way, $U$ is not only independent of $\xi^*$, but of $\Xibar$. The hedgeable random variables $X \in \mathcal{X}$ correspond then to random variables $\hat{X}(\hat{\omega}) =\hat{X}(\omega, \, \cdot \,) = X(\omega)$, whereas the random variables hedgeable through randomization are $\hat{\mathcal{X}} = \bigl\{\hat{X} \in L^0(\hat{\Omega}, \hat{\mathcal{F}}, \hat{\mathbb{P}}) \, : \, \hat{X}( \, \cdot \, , u) \in \mathcal{X} \text{ for every } u \in [0,1] \bigr\}$. To keep the notation of the paper simple, we will assume that the probability space is rich enough to support the uniform random variables, i.e., the enlargement has been made a priori if necessary. We will denote the solutions as $(Z^*, \xi^*,U) \in \mathcal{Z} \times \mathcal{Y} \times  \mathcal{U}$ resp. $\mathcal{Z}' \times \mathcal{Y}' \times  \mathcal{U}$ where $\mathcal{U}$ denotes the set of all standard uniform random variables independent of $\Xibar$.

In this section, we obtain the following results. First,  Proposition~\ref{prop1} shows that the \textbf{maximin} problem \eqref{dist-problem} has at least one solution  of the form $\bigl(Z^*, \xi^*,U\bigr)$ with $Z^* = F^{-1}(1-\hat{F}_{\xi^*}(\xi^*; U))$, $\xi^* \in \Xibar$ and $U$ is a uniformly distributed variable over $(0,1)$ independent of $\Xibar$. Furthermore, if $\xi^*$ has a continuous distribution and $F$ is continuous, then the solution to the \textbf{maximin} problem \eqref{dist-problem} is unique. Proposition~\ref{CEsolu_prop} then shows that if $(Z^*, \xi^*, U)$ with $Z^* = F^{-1}(1-\hat{F}_{\xi^*}(\xi^*; U))$ is a solution of the \textbf{maximin} problem \eqref{dist-problem} that is perfectly cost-efficient then it is also a solution to the \textbf{minimax} problems \eqref{dist0} and \eqref{cvxminimaxA}. Thus, the \textbf{maximin} problem and the \textbf{minimax} problem share the solution $\bigl(Z^*, \xi^*,U\bigr)$ if $Z^*$ is perfectly cost-efficient. However, the \textbf{minimax} problem might have further solutions $(Z^*, \xi^*, U)$ that cannot be written with $Z^* = F^{-1}(1-\hat{F}_{\xi^*}(\xi^*; U))$, $\xi^* \in \Xibar$. All these properties are clearly present in the simple 3-state model we discuss in \cite{BS24appendix} and we invite the reader to use this model as guidance and reference for the following theoretical results.

\begin{proposition}\label{prop1}
	The problem \eqref{dist-problem} has a solution  $(Z^*, \xi^*) \in \mathcal{Z}^\prime \times \mathcal{Y}^\prime$ (as defined in \eqref{domains}) and $Z^*$ can be expressed as
	\begin{equation}\label{solution}
		Z^* = F^{-1}\bigl(1-\hat{F}_{\xi^*}(\xi^*; U)\bigr)
	\end{equation}
	for some $\xi^* \in \Xibar$ and $U$ a standard uniform random variable independent of $\Xibar$.
\end{proposition}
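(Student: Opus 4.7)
The plan is to reduce the statement to a characterization of minimizers of $Z \mapsto \E[\xi^* Z]$ over $\{Z \in \mathcal{X} : Z \sim F\}$ for a fixed outer optimizer $\xi^* \in \mathcal{Y}^\prime$, and then to apply the classical Hardy-Littlewood-Polya rearrangement inequality in its probabilistic form to pin down the structure of such minimizers. The definition of a solution already supplies such a $\xi^*$ together with the corresponding $Z^*$.

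First, for any $Z \sim F$ the rearrangement inequality yields the sharp lower bound
\[
\E[\xi^* Z] \geq \int_0^1 F_{\xi^*}^{-1}(u) F^{-1}(1-u) \, du,
\]
with equality if and only if the pair $(\xi^*, Z)$ is counter-monotonic. Second, I would establish that this bound is attained. Since $(\Omega, \mathcal{F}, \P)$ is atomless and standard Borel, there exists a uniform random variable $U_{\xi^*}$ on $(0,1)$ independent of $\xi^*$. The distributional transform $V := \hat{F}_{\xi^*}(\xi^*; U_{\xi^*})$ from \eqref{DISTTRANS} is then itself uniform on $(0,1)$ and satisfies $F_{\xi^*}^{-1}(V) = \xi^*$ almost surely. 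Setting $Z^\circ := F^{-1}(1-V)$ yields $Z^\circ \sim F$, counter-monotonic with $\xi^*$, and achieving the lower bound; in particular the inner infimum is attained.

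Third, I would extract the claimed form for an arbitrary minimizer $Z^*$. Equality in the rearrangement inequality forces $(\xi^*, Z^*)$ to be counter-monotonic, so by the standard representation of counter-monotonic couplings there is a uniform random variable $V^*$ on $(0,1)$ with $\xi^* = F_{\xi^*}^{-1}(V^*)$ and $Z^* = F^{-1}(1-V^*)$ almost surely. To match the form $Z^* = F^{-1}(1-\hat{F}_{\xi^*}(\xi^*; U_{\xi^*}))$, define on each atom $\{\xi^* = x\}$ of the distribution of $\xi^*$ the variable
\[
U_{\xi^*} := \frac{V^* - \P[\xi^* < x]}{\P[\xi^* = x]},
\]
which is conditionally uniform on $(0,1)$ given $\xi^* = x$; on the continuous part of the distribution of $\xi^*$ let $U_{\xi^*}$ be any uniform variable independent of $\xi^*$ (constructible thanks to atomlessness). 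One then verifies $V^* = \hat{F}_{\xi^*}(\xi^*; U_{\xi^*})$ and $U_{\xi^*}$ is independent of $\xi^*$, whence $Z^* = F^{-1}(1-\hat{F}_{\xi^*}(\xi^*; U_{\xi^*}))$.

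The main obstacle is this final measurable-selection step. In the continuously distributed case it is transparent, since the constraint $V^* = F_{\xi^*}(\xi^*)$ is forced almost surely. In the atomic case, however, one must show that the freedom in choosing $V^*$ inside each atom of $\xi^*$ can be packaged as a single uniform random variable globally independent of $\xi^*$; this requires a careful conditional disintegration and relies crucially on the standing atomless, standard Borel hypothesis on $(\Omega, \mathcal{F}, \P)$.
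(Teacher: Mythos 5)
Your proposal is correct and its core coincides with the paper's argument: both rest on the lower Fr\'echet--Hoeffding (rearrangement) bound for $\E[\xi^* Z]$ over $Z \sim F$, attained by the distributional transform $F^{-1}\bigl(1-\hat{F}_{\xi^*}(\xi^*;U_{\xi^*})\bigr)$, together with the observation that equality forces the counter-monotone coupling. The one structural difference is that you take the outer optimizer $\xi^*\in\Xibar$ directly from the hypothesis that $(Z^*,\xi^*)\in\mathcal{Z}^\prime\times\mathcal{Y}^\prime$, whereas the paper additionally runs a Koml\'os/Ces\`aro compactness argument on a maximizing sequence $(\xi_n)\subseteq\Xi$ to exhibit $\xi^*$ as a limit (in $\Xibar$) of genuine pricing kernels; given that the proposition assumes a solution exists, your shortcut is legitimate, though it forgoes the constructive information that step provides. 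Conversely, your third step is more explicit than the paper's where it matters: the paper's a.s.-uniqueness computation (the Fubini/Hoeffding identity) is carried out only for continuous $F_{\xi}$, and the atomic case is dispatched by assertion via the distributional transform, while you spell out the transfer-theorem representation of the counter-monotone pair by a single uniform $V^*$ and the per-atom disintegration $U_{\xi^*}=\bigl(V^*-\P[\xi^*<x]\bigr)/\P[\xi^*=x]$ recovering the form \eqref{solution}. The obstacle you flag at the end is surmountable on the atomless standard Borel space assumed throughout: conditionally on each atom $\{\xi^*=x\}$ the variable $V^*$ is uniform on $\bigl[\P[\xi^*<x],\P[\xi^*\le x]\bigr]$, and on the diffuse part $V^*=F_{\xi^*}(\xi^*)$ is $\xi^*$-measurable so an independent uniform can be appended there; hence $U_{\xi^*}$ is conditionally uniform given $\xi^*$ everywhere, i.e., independent of $\xi^*$, and your argument closes.
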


\begin{proof}
	Consider first the complete market case where the proof consists of studying the solution to \eqref{CE} for a given $\xi\in\Xibar$. Recall that for random variables $\xi$, $Y$ with cdfs $F_\xi$, $F_Y$, and an arbitrary standard uniform distributed random variable $U$ we have the Fr\'{e}chet-Hoeffding bound
	\begin{equation*}
		\E\Bigl[F^{-1}_\xi(U)F^{-1}_Y(1-U)\Bigr] \leq \E[\xi Y].
	\end{equation*}
	If the function $F_\xi$ is continuous and $\E\bigl[F^{-1}_\xi(U)F^{-1}_Y(1-U)\bigr]$ is finite, then the bound is attained if and only if  $(\xi,Y)$ equals $\bigl(F^{-1}_\xi(U),F^{-1}_Y(1-U)\bigr)$ almost surely for some standard uniform $U$. The result is well-known when both $F_\xi$ and $F_Y$ are continuous but holds also when only one of the two distributions is continuous. The proof of the uniqueness is similar to the proof of Corollary 2 of \cite{BBV14} and we recall it here. By Fubini,
	\begin{equation*}
		\E[\xi Y] -\E\bigl[\xi \bigl(F_Y^{-1}(1-F_\xi(\xi)\bigr)\bigr] = \int\limits_{0}^{\infty }\int\limits_{-\infty }^{\infty} \Bigl(\mathbb{P}\bigl[\xi>x,Y>y\bigr]-\mathbb{P} \bigl[\xi>x,F_Y^{-1}\bigl(1- F_\xi(\xi)\bigr)>y\bigr]\Bigr) \, dydx.
	\end{equation*}
	As $F_Y^{-1} \bigl(1-F_\xi(\xi)\bigr)$ has cdf $F_Y$ and is antimonotonic with $\xi$ it minimizes $\E[\xi Y]$ for all $Y$ with cdf $F_Y$. In particular, consider  $Y$ with cdf $F_Y$ that achieves this minimum. Then the left-hand side is equal to 0 whereas by Fr\'echet bounds the function inside the integral is non-negative. Then it must be equal to 0 almost everywhere and for almost every $x$ and $y$
	\begin{equation*}
		\mathbb{P}[\xi>x,Y>y]=\mathbb{P}\Bigl[\xi>x,F_Y^{-1} \bigl(1-F_\xi(\xi)\bigr)>y\Bigr].
	\end{equation*}
	It follows that $(\xi,Y)$ has the same joint cdf as $\bigl(\xi,F_Y^{-1} \bigl(1-F_\xi(\xi)\bigr)\bigr)$, thus they have the same support. Since $F_Y^{-1} \bigl(1-F_\xi(\xi)\bigr)$ is a function of $\xi$, then $Y$ must also be a function of $\xi$ almost surely, and 
	\begin{equation*}
		Y = F_Y^{-1}\bigl(1-F_\xi(\xi)\bigr) \, \text{ a.s.}
	\end{equation*}
	Thus, for given $\xi \in \Xi$, the minimizer of
	\begin{equation}\label{min}
		\inf_{Z \in D(F)} \E[ \xi Z]
	\end{equation}
	is given almost surely by the random variable $F^{-1}\bigl(1-F_\xi(\xi)\bigr)$.

	In the case that $F_\xi$ is not continuous, the lower Fr\'echet-Hoeffding bound can be reached using the  distributional transform $\hat{F}_{\xi}$ defined in \eqref{DISTTRANS}. Then for the fixed standard uniform random variable $U$ independent of $\xi$ it follows that $\hat{F}_{\xi}(\xi,U)$ is standard uniformly distributed and a minimizer of \eqref{min} is given almost surely by the random variable $F^{-1}\bigl(1-\hat{F}_\xi(\xi,U)\bigr)$ (which coincides with $F^{-1}\bigl(1-F_\xi(\xi)\bigr)$ for continuous $F_{\xi}$).
	
	The second step of the proof is to use the above result to find an explicit expression of optimal solutions to the maximin problem \eqref{dist-problem} in an incomplete market. As problem \eqref{dist-problem} has a finite solution, we can pick a sequence $(\xi_n)$, $\xi_n \in \Xi$, such that 
	\begin{equation*}
		\inf_{\substack{Z \in \mathcal{X} \\ Z \in D(F)}} \E\bigl[ \xi_n Z\bigr] 
	\end{equation*}
	converges to the supremum. By assumption, the probability space is rich enough to support a standard uniform random variable $U$ independent of $\Xibar$ for which we have, by Assumption \ref{ass:usc}
	\begin{equation*}
		\E\Bigl[ \xi_n F^{-1}\bigl(1-\hat{F}_{\xi_n}(\xi_n,U)\bigr)\Bigr] =  \inf_{\substack{Z \in \mathcal{X} \\ Z \in D(F)}} \E\bigl[ \xi_n Z\bigr].
	\end{equation*}
	Moreover, as $\Xi$ is bounded in $L^1$, we can extract by Koml\'{o}s's lemma a subsequence $(\xi_{n_k})$ that converges almost surely in Cesar\`{o} sense to some random variable $\xi^* \in \loofp$. As $\zeta_k := \frac{1}{k} \sum_{j=1}^k   \xi_{n_j}$ is in $\Xi$ as this is a convex set, we have $\lim_{k \to \infty} \zeta_k = \xi^* \in \Xibar$  as $\Xibar$ is closed in probability. Finally, as 
	\begin{align*}
\lim_{k \to \infty} \inf_{\substack{Z \in \mathcal{X} \\ Z \in D(F)}} \E\bigl[ \zeta_k Z\bigr]
& = \lim_{k \to \infty}  \E\Bigl[\zeta_k  F^{-1}\bigl(1-\hat{F}_{\zeta_k}(\zeta_k,U)\bigr)\Bigr]  \leq \limsup_{k \to \infty}  \E\Bigl[\zeta_k  F^{-1}\bigl(1-\hat{F}_{\xi^*}(\xi^*,U)\bigr)\Bigr] \\ 
& \leq \E\Bigl[\xi^*  F^{-1}\bigl(1-\hat{F}_{\xi^*}(\xi^*,U)\bigr)\Bigr],
	\end{align*} 
by the Fr\'{e}chet--Hoeffding bounds and the assumption of upper semicontinuity, the supremum is indeed attained in $\xi^*$.
\end{proof}

If  $\xi^* \in \Xibar$ is continuously distributed, then the solution in \eqref{solution} writes simply as
\begin{equation}\label{solutionCONT}
	Z^* =  F^{-1}\bigl(1-F_{\xi^*} (\xi^*)\bigr). 
\end{equation}

\begin{remark}\label{rem:condA}
	Note that it is often hard to show that the optimizer $\xi^*$ has a continuous distribution and this has usually to be done ad hoc. An exception is the following assumption on the market model, which turns out to be sufficient but not necessary: If the set $\Xi$ of pricing kernels is uniformly absolutely continuous with respect to the Lebesgue measure $\lambda$ (i.e., for every $\varepsilon>0$ there exists a $\delta>0$ such that $\lambda[A] < \delta $ for $A \in \mathcal{B}(\R_{\geq 0})$ implies $\P[\xi \in A] < \varepsilon$ for all $\xi \in \Xi$), then all $\xi \in \Xibar$ have a continuous cdf $F_\xi$. The proof of this claim is essentially given in the proof of Proposition 2.6 in \cite{BS14}.
\end{remark}

We will show that the explicit expression of the solution of \eqref{dist-problem} derived in Proposition~\ref{prop1} will also be an explicit solution of the cost-efficiency problem as long as $Z^*$ is perfectly cost-efficient. Thus this optimizer can be actually used to derive hedging strategies that will achieve the cost-efficient payoff. We first prove some auxiliary results before providing the main result in Theorem~\ref{CEsolu}.

\begin{lemma}\label{lem:gen_conc}
	The mapping $\xi \mapsto \E\big[\xi F^{-1}\bigl(1-\hat{F}_\xi(\xi; U)\bigr)\bigr]$ is concave on $\Xibar$ for some r.v. $U$ uniformly distributed over $(0,1)$ and independent of $\Xibar$. If $F$ is continuously distributed, then  the mapping is strictly concave.
\end{lemma}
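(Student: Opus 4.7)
My plan is to recognise the map $g(\xi) := \E\bigl[\xi F^{-1}(1-\hat{F}_\xi(\xi;U))\bigr]$ as a variational quantity and then read off concavity from elementary principles. For each fixed $\xi \in \Xibar$, the first step of the proof of Proposition~\ref{prop1} already establishes that the antimonotonic coupling $F^{-1}(1-\hat{F}_\xi(\xi;U))$ attains the Fr\'echet--Hoeffding lower bound, so that
$$
g(\xi) \;=\; \inf_{Z \sim F} \E[\xi Z].
$$
Since the right-hand side is the pointwise infimum of the family $\{\xi \mapsto \E[\xi Z] : Z \sim F\}$ of affine (hence concave) functions of $\xi$, concavity of $g$ on $\Xibar$ follows at once.

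To upgrade to strict concavity when $F$ is continuous, I would fix distinct $\xi_1, \xi_2 \in \Xibar$ and $\lambda \in (0,1)$, set $\xi_\lambda := \lambda \xi_1 + (1-\lambda)\xi_2$, and write $Z_\lambda := F^{-1}(1-\hat{F}_{\xi_\lambda}(\xi_\lambda;U))$ for the corresponding antimonotonic minimizer. Linearity of expectation gives
$$
g(\xi_\lambda) \;=\; \E[\xi_\lambda Z_\lambda] \;=\; \lambda\, \E[\xi_1 Z_\lambda] + (1-\lambda)\, \E[\xi_2 Z_\lambda] \;\geq\; \lambda g(\xi_1) + (1-\lambda) g(\xi_2),
$$
with equality if and only if $Z_\lambda$ simultaneously minimises $\E[\xi_i Z]$ over $Z \sim F$ for both $i=1,2$. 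To rule this out, I would invoke the uniqueness part of the Fr\'echet--Hoeffding argument from the proof of Proposition~\ref{prop1}: when $F$ is continuous, any $Z \sim F$ attaining the lower bound against a pricing kernel $\eta$ must equal $F^{-1}(1-\hat{F}_\eta(\eta;V))$ almost surely for some uniform $V$ independent of $\eta$. Applying this to $\eta \in \{\xi_1, \xi_2, \xi_\lambda\}$ pins down a common rank $\hat{F}_{\xi_1}(\xi_1; V_1) = \hat{F}_{\xi_2}(\xi_2; V_2) = \hat{F}_{\xi_\lambda}(\xi_\lambda; U)$ almost surely, which together with the affine relation $\xi_\lambda = \lambda \xi_1 + (1-\lambda)\xi_2$ is intended to force $\xi_1 = \xi_2$ and yield the desired contradiction.

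The main obstacle is precisely this last rigidity step: simultaneous optimality on its own does not exclude \emph{comonotonic} pairs $\xi_1 \neq \xi_2$, because comonotonic pricing kernels share their antimonotonic coupling with any $F$-distributed variable and would produce equality in the concavity estimate. The argument must therefore genuinely use the hypothesis that $F$ is the cdf of a cost-efficient distribution (beyond mere continuity) in order to exclude such degenerate pairs from $\Xibar$. I expect this to be the delicate point of the proof and would plan to devote most of the effort there, either by transferring the contradiction to the level of superhedging prices via \eqref{SHP} or by sharpening the Fr\'echet--Hoeffding uniqueness clause to control the atomic-randomisation in $\hat{F}_{\xi_1}$ and $\hat{F}_{\xi_2}$.
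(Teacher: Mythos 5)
Your concavity argument is correct and is, in substance, the paper's own proof: the two-line display in the paper (split $\E\bigl[Z_\lambda F^{-1}\bigl(1-\hat{F}_\lambda(Z_\lambda;U)\bigr)\bigr]$ linearly in $Z_\lambda=\lambda\xi_1+(1-\lambda)\xi_2$ and bound each term from below by the Fr\'{e}chet--Hoeffding minimum for the corresponding $\xi_i$) is exactly your ``pointwise infimum of affine maps'' computation unrolled at a single point, both resting on the fact, established in the first step of the proof of Proposition~\ref{prop1}, that the antimonotone coupling attains $\inf_{Z \sim F}\E[\xi Z]$ while any other $F$-distributed coupling is feasible but possibly suboptimal. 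For this half you may stop there; note in passing that the cost-efficiency hypothesis on $F$ is not used.

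For strict concavity your proposal has a genuine gap --- you do not close the argument, and you say so --- but the obstruction you name is real and is exactly where the difficulty sits. If $\xi_1\neq\xi_2$ are comonotone elements of $\Xibar$, say $\xi_i=f_i(W)$ with $f_i$ non-decreasing, then $Z_\lambda$ is a non-decreasing function of the same $W$, any $F$-distributed variable antimonotone with $Z_\lambda$ is antimonotone with both $\xi_1$ and $\xi_2$, and the concavity inequality is an equality; continuity of $F$ does not exclude this, and nothing in the definition of $\Xibar$ (a convex set of densities with unit or sub-unit mean) forbids distinct comonotone pairs. Conversely, when $F$ is continuous $F^{-1}$ is injective, so equality forces both $\xi_1$ and $\xi_2$ to be non-decreasing functions of the single uniform rank $\hat{F}_\lambda(Z_\lambda;U)$, i.e., comonotone; the sharp statement is therefore strict concavity along every segment whose endpoints are \emph{not} comonotone. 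Your hope that the hypothesis ``$F$ is cost-efficient'' repairs the full claim is misplaced: that hypothesis constrains $F$ relative to the market, not the internal structure of $\Xibar$, and it appears nowhere in the concavity computation. For calibration, the paper's own proof of the strict part consists of the single sentence ``If $F$ is continuous then the strict concavity follows,'' with no supporting argument, so your diagnosis is strictly more informative than the reference proof; what the downstream uses (e.g., the uniqueness argument in Corollary~\ref{cor:maxminminmax}) actually require is uniqueness of the optimal $Z$-component, which is unaffected by comonotone degeneracy since comonotone maximizers induce the same antimonotone coupling with $F$.
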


\begin{proof}
	Let $\xi_1$, $\xi_2 \in \Xibar$ and $\lambda \in (0,1)$. Then
	\begin{align*}
		\E\bigl[\xi_\lambda F^{-1}\bigl(1-\hat{F}_\lambda(\xi_{\lambda};U\bigr)\bigr)\bigr] 
		& =\lambda \E\bigl[ \xi_1F^{-1}\bigl(1-\hat{F}_\lambda(\xi_{\lambda};U\bigr)\bigr] + (1-\lambda)\E\bigl[ \xi_2F^{-1}\bigl(1-\hat{F}_\lambda(\xi_{\lambda};U\bigr)\bigr] \\
		&\geq \lambda \E\bigl[ \xi_1F^{-1}\bigl(1-\hat{F}_{\xi_{1}}( \xi_{1}; U)\bigr)\bigr] + (1-\lambda)\E\bigl[ \xi_2F^{-1}\bigl(1-\hat{F}_{\xi_{2}}( \xi_{2}; U)\bigr)\bigr]
	\end{align*}
	where $\xi_{\lambda}:=\lambda\xi_1 + (1-\lambda) \xi_2$, $\hat{F}_\lambda$ is its distributional transform, and $U$ is a uniform over $(0,1)$ independent of $\Xibar$. The inequality follows from the Fr\'{e}chet--Hoeffding bounds. If $F$ is continuous then the strict concavity follows.
\end{proof}

\begin{corollary}\label{cor:maxminminmax}
	If $(Z^*, \xi^*, U)$ with $Z^* \in D(F)$, $Z^* = F^{-1}(1-\hat{F}_{\xi^*}(\xi^*; U))$, $\xi^* \in \Xibar$ and $U$ independent of $\Xibar$, is a solution to the (convexified) minimax problem \eqref{dist0} (resp. \eqref{cvxminimaxA}), then it is also a solution to the maximin problem \eqref{dist-problem} and, a fortiori, the convexified maximin problem \eqref{convexifiedmaximin}. If such $Z^*$ solves any of the four equivalent problems in \eqref{CEFF} for a perfectly cost-efficient cdf  $F$, then, if $F$ is continuous, it is the unique solution of this type, up to the choice of the randomizer $U$. In particular, it is the unique solution of this type if $\xi^*$ is continuously distributed.
\end{corollary}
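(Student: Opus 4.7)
The plan is to split the proof into two parts. First, I would show that any solution $(Z^*, \xi^*)$ to the (convexified) minimax problem of the specified form is automatically a solution to the maximin problem \eqref{dist-problem}, whence also to its convexified version. Second, under continuity of $F$, I would leverage the strict concavity from Lemma~\ref{lem:gen_conc} to obtain uniqueness of $\xi^*$, and hence of $Z^*$ in the continuously-distributed case.

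For the minimax-to-maximin implication, the key observation is already embedded in the proof of Proposition~\ref{prop1}: for any $\xi \in \Xibar$, the Fr\'{e}chet--Hoeffding lower bound for $\inf_{Z \sim F}\E[\xi Z]$ is attained by the antimonotonic coupling $F^{-1}(1-\hat{F}_\xi(\xi; U))$, so in particular the hypothesis on the form of $Z^*$ gives
\[
    \E\bigl[\xi^* Z^*\bigr] = \E\bigl[\xi^* F^{-1}(1-\hat{F}_{\xi^*}(\xi^*; U_{\xi^*}))\bigr] = \inf_{\substack{Z \in \mathcal{X} \\ Z \sim F}} \E\bigl[\xi^* Z\bigr].
\]
Since $(Z^*,\xi^*)$ attains the (convexified) minimax value and Proposition~\ref{P2.5} equates it with the maximin value, we may chain
\[
    \sup_{\xi \in \Xi}\inf_{Z \sim F}\E[\xi Z] = \inf_{Z \in \convc(F)}\sup_{\xi \in \Xi}\E[\xi Z] = \E[\xi^* Z^*] = \inf_{Z \sim F}\E[\xi^* Z],
\]
which shows simultaneously that $\xi^*$ attains the outer supremum of the maximin problem and $Z^*$ attains the inner infimum for this $\xi^*$; that is, $(Z^*,\xi^*) \in \mathcal{Z}' \times \mathcal{Y}'$ as required by \eqref{domains}. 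The convexified maximin statement follows a fortiori from the inclusion $\{Z \in \mathcal{X}: Z \sim F\} \subseteq \convc(F)$ together with the equality of the two values given by Proposition~\ref{P2.5}.

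For the uniqueness part, I would invoke Proposition~\ref{prop1} once more: every solution of the prescribed form to any of the four equivalent problems of \eqref{CEFF} has its first component forced to satisfy $\xi^* \in \argmax_{\xi \in \Xibar}\E\bigl[\xi F^{-1}(1-\hat{F}_\xi(\xi; U))\bigr]$. Lemma~\ref{lem:gen_conc} guarantees that this objective is strictly concave on the convex set $\Xibar$ when $F$ is continuous, so the maximizer $\xi^*$ is unique. Once $\xi^*$ is pinned down, the defining formula fixes $Z^*$ up to the choice of the independent standard uniform $U_{\xi^*}$; and when $\xi^*$ is additionally continuously distributed we have $\hat{F}_{\xi^*}(\xi^*; u) = F_{\xi^*}(\xi^*)$ almost surely for every $u \in (0,1)$, eliminating the remaining degree of freedom and yielding outright uniqueness of $Z^*$.

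The most delicate point, in my view, is the bookkeeping across the four value problems: the definitions \eqref{7b} and \eqref{domains} of ``solution'' are stricter than merely attaining the common optimal value, so one must carefully verify the set memberships $\xi^* \in \mathcal{Y}'$ and $Z^* \in \mathcal{Z}'$, and similarly confirm that the strict concavity of Lemma~\ref{lem:gen_conc} genuinely pins down the $\xi^*$-component coming from each of the four formulations (not merely from the maximin problem where concavity is most directly visible). Once this bookkeeping is in place, the Fr\'{e}chet--Hoeffding identity together with the minimax equality of Proposition~\ref{P2.5} does the real work.
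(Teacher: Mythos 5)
Your proof is correct and takes essentially the same route as the paper's: the antimonotone form of $Z^*$ forces $\E[\xi^* Z^*]=\inf_{Z\sim F}\E[\xi^* Z]$, which together with the minimax equality of Proposition~\ref{P2.5} sandwiches $(Z^*,\xi^*)$ into $\mathcal{Z}'\times\mathcal{Y}'$, and uniqueness is obtained from the strict concavity of Lemma~\ref{lem:gen_conc} on $\Xibar$. The paper's own two-sentence proof is considerably terser; your version merely makes explicit the Fr\'{e}chet--Hoeffding attainment and the squeeze of values that it leaves implicit.
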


\begin{proof}
	Given $Z^*$, $\xi^*$ has to be a maximizer of \eqref{dist-problem} over the set of pricing kernels by the equality of the value function of the four problems. Moreover, if there are two distinct optima $(Z_1^*,\xi_1^*)$ and $(Z_2^*,\xi_2^*)$ for any problem, they are distinct optima for the convexified maximin problem \eqref{convexifiedmaximin}. But strict concavity in Lemma~\ref{lem:gen_conc} would allow to construct $\xi_{\lambda} =\lambda\xi_1 + (1-\lambda) \xi_2$, $\lambda \in (0,1)$, with
	\begin{align*}
		\E\bigl[\xi_\lambda F^{-1}\bigl(1-\hat{F}_\lambda(\xi_{\lambda};U\bigr)\bigr)\bigr]> \max{\Bigl(\E\bigl[ \xi_1F^{-1}\bigl(1-\hat{F}_{\xi_{1}}( \xi_{1}; U)\bigr)\bigr],\E\bigl[ \xi_2F^{-1}\bigl(1-\hat{F}_{\xi_{2}}( \xi_{2}; U)\bigr)\bigr]\Bigr)}
	\end{align*}
violating the optimality of $\xi_1^*$ and $\xi_2^*$ for this problem.
\end{proof}

The following proposition provides the converse result, namely that the solution to the maximin problem \eqref{dist-problem} is also a solution to the minimax problem \eqref{dist0} as long as it is perfectly cost-efficient.

\begin{proposition}\label{CEsolu_prop}
	Let $(Z^*, \xi^*, U)$ with $Z^* \in D(F)$, $Z^* = F^{-1}(1-\hat{F}_{\xi^*}(\xi^*; U))$, $\xi^* \in \Xibar$ and $U$ independent of $\Xibar$, be a perfectly cost-efficient solution to the maximin problem \eqref{dist-problem} for the cdf $F$. Then it is also a solution to the minimax and convexified minimax	 problems \eqref{dist0} and \eqref{cvxminimaxA}.
\end{proposition}

\begin{proof}
	We note that if 
	\begin{equation}\label{eq:supequal}
		\sup_{\zeta \in \Xi} \E\bigl[\zeta F^{-1}\bigl(1-\hat{F}_{\xi^*}(\xi^*; U)\bigr)\bigr] = \E\bigl[\xi^* F^{-1}\bigl(1-\hat{F}_{\xi^*}(\xi^*; U)\bigr)\bigr],
	\end{equation}
	 it follows that  $(Z^*, \xi^*)$ is a solution to the (convexified) minimax problem as $Z^* = F^{-1}\bigl(1-\hat{F}_{\xi^*}(\xi^*; U)\bigr)$ achieves the infimum in
	\[
		\inf_{Z \in D(F)} \sup_{\xi \in \Xi} \E[\xi Z] \qquad \Bigl(\text{resp. } \inf_{Z \in \convc(F)} \sup_{\xi \in \Xi} \E[\xi Z] \Bigr).
	\]

	The proof of \eqref{eq:supequal} relies on a fixed-point type argument using the convexly compact version of the Knaster--Kuratowski--Mazurkiewicz (KKM) lemma from Section 4.2 of \cite{Z10}: For $\xi \in \Xibar$ define
	\begin{align*}
		\mathcal{B} (\xi) := \Bigl\{ \zeta \in \Xibar \, : \, & \E\bigl[\xi F^{-1}\bigl(1-\hat{F}_{\zeta}(\zeta; U)\bigr)\bigr] \leq \E\bigl[\zeta F^{-1}\bigl(1-\hat{F}_{\zeta}(\zeta; U)\bigr)\bigr]\Bigr\},
	\end{align*}
	and $\mathcal{A}(\xi)$ as its convex closure
	\[
		\mathcal{A} (\xi) := \overline{\conv{\bigl(\mathcal{B} (\xi)\bigr)}}^{L^0}.
	\]
	We want to prove that the family $\bigl(\mathcal{A} (\xi)\bigr)_{\xi \in \Xibar}$ has the KKM property (see Definition 4.6 of \cite{Z10}), i.e., each $\mathcal{A} (\xi)$ is closed and for each finite set $\{ \xi_1, \ldots , \xi_n\} \subseteq \Xibar$ we have
	\[
		\conv\bigl(\{ \xi_1, \ldots , \xi_n\}\bigr) \subseteq \bigcup_{i=1}^n \mathcal{A}(\xi_i).
	\]
	To show that, note that for arbitrary $\tilde{\xi} \in \conv \bigl(\{ \xi_1, \ldots , \xi_n\}\bigr)$ there exist $\lambda_i^n \geq 0$, $\sum_{i=1}^n \lambda_i^n =1$ with $\tilde{\xi} = \sum_{i=1}^n \lambda_i^n \xi_i$. We have thus 
	\begin{align*}
		\E\bigl[\tilde{\xi} F^{-1}\bigl(1-\hat{F}_{\tilde{\xi}}(\tilde{\xi}; U)\bigr)\bigr] & =  \sum_{i=1}^n \lambda_i^n \E\bigl[\xi_i F^{-1}\bigl(1-\hat{F}_{\tilde{\xi}}(\tilde{\xi}; U)\bigr)\bigr] \\ & \geq \min_{i \in \{1,\ldots, n\}} \E\bigl[\xi_i F^{-1}\bigl(1-\hat{F}_{\tilde{\xi}}(\tilde{\xi}; U)\bigr)\bigr]  = \E\bigl[\xi^\wedge F^{-1}\bigl(1-\hat{F}_{\tilde{\xi}}(\tilde{\xi}; U)\bigr)\bigr]
	\end{align*}
	where $\xi^\wedge$ denotes a minimizer; we have proved that $\tilde{\xi} \in \mathcal{A}(\xi^\wedge) \subseteq \bigcup_{i=1}^n \mathcal{A}(\xi_i)$ concluding the proof of the KKM property.

	Thus, we can apply the KKM lemma, Theorem 4.8 of \cite{Z10}: As all sets $\mathcal{A}(\xi)$ are closed and convex subsets of $\Xibar$, they are all convexly compact by Theorem 3.1 of \cite{Z10}. Thus we can conclude by the KKM lemma that
	\[
		\bigcap_{\xi \in \Xibar} \mathcal{A}(\xi) \neq \emptyset.
	\]
	In other words, there is $\hat{\zeta} \in \Xibar$ such that for each $\xi \in \Xibar$ there exists a sequence $\zeta_i^n \in \Xibar$ and $\lambda_i^n \geq 0$, $\sum_{i=1}^n \lambda_i^n =1$ such that
	\begin{equation} \label{eq:universal}
		\E\bigl[\xi F^{-1}\bigl(1-\hat{F}_{\zeta_i^n}(\zeta_i^n; U)\bigr)\bigr] \leq \E\bigl[\zeta_i^n F^{-1}\bigl(1-\hat{F}_{\zeta_i^n}(\zeta_i^n; U)\bigr)\bigr],
	\end{equation}
	and $\sum_{i=1}^n \lambda_i^n \zeta_i^n \to \hat{\zeta}$. Setting $\xi = \xi^*$, we have by the Fr\'{e}chet--Hoeffding bounds
	\[
		\E\bigl[\xi^* F^{-1}\bigl(1-\hat{F}_{\xi^*}(\xi^*; U)\bigr)\bigr] \leq \E\bigl[\xi^* F^{-1}\bigl(1-\hat{F}_{\zeta_i^n}(\zeta_i^n; U)\bigr)\bigr] \leq \E\bigl[\zeta_i^n F^{-1}\bigl(1-\hat{F}_{\zeta_i^n}(\zeta_i^n; U)\bigr)\bigr];
	\]
	combining this result with Lemma~\ref{lem:gen_conc} yields
	\[
		\E\bigl[\xi^* F^{-1}\bigl(1-\hat{F}_{\xi^*}(\xi^*; U)\bigr)\bigr] = \E\bigl[\zeta_i^n F^{-1}\bigl(1-\hat{F}_{\zeta_i^n}(\zeta_i^n; U)\bigr)\bigr].
	\]
	Again, by Lemma~\ref{lem:gen_conc} we have therefore 
	\[
		\zeta_i^n \in \mathcal{C} := \argmax_{\xi \in \Xibar} \E\bigl[\xi F^{-1}\bigl(1-\hat{F}_\xi(\xi; U)\bigr)\bigr] 
	\]
	for all $\zeta_i^n$. Noting that $\mathcal{C}$ is convex and $\xi^* \in \mathcal{C}$ we conclude that \eqref{eq:universal} implies
	\begin{equation}
		\sup_{\xi \in \Xi} \E\bigl[\xi F^{-1}\bigl(1-\hat{F}_{\xi^*}(\xi^*; U)\bigr)\bigr] \leq \E\bigl[\xi^* F^{-1}\bigl(1-\hat{F}_{\xi^*}(\xi^*; U)\bigr)\bigr]
	\end{equation}
	and thus \eqref{eq:supequal}.
\end{proof}

\begin{corollary}\label{CO4.10}
	If $(Z^*, \xi^*)$ with $Z^* \in D(F)$, $Z^* = F^{-1}(1-F_{\xi^*}(\xi^*))$, for some continuously distributed $\xi^*$, is a solution to the maximin problem \eqref{dist-problem} for a perfectly cost-efficient cdf $F$, then if $F$ is continuous, it is also the unique solution to the minimax (and convexified minimax) problems \eqref{dist0} (and \eqref{cvxminimaxA}).
\end{corollary}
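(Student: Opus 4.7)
The plan is to reduce the statement directly to Proposition~\ref{CEsolu_prop} by exploiting the two continuity hypotheses: continuity of $F_{\xi^*}$ collapses the auxiliary randomization, and continuity of $F$ restores the uniqueness in the Fr\'{e}chet--Hoeffding lower bound.

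For the existence part, I would invoke Proposition~\ref{CEsolu_prop} applied to the given maximin solution $(Z^*, \xi^*)$: it delivers a uniform $\hat U_{\xi^*}$ independent of $\xi^*$ such that $\hat Z^* := F^{-1}\bigl(1 - \hat F_{\xi^*}(\xi^*; \hat U_{\xi^*})\bigr)$, paired with $\xi^*$, solves both \eqref{dist0} and \eqref{minimaxpb}. Since $\xi^*$ is continuously distributed, the distributional transform collapses to the ordinary cdf: $\hat F_{\xi^*}(\xi^*; u) = F_{\xi^*}(\xi^*)$ almost surely for every $u \in (0,1)$, so $\hat Z^* = F^{-1}\bigl(1 - F_{\xi^*}(\xi^*)\bigr) = Z^*$. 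Thus $(Z^*, \xi^*)$ itself solves both the minimax and the convexified minimax problems.

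For uniqueness, suppose $\tilde Z \sim F$ is any minimizer of \eqref{dist0}. Using Corollary~\ref{co3.17} to identify the values of the four problems in \eqref{CEFF}, and the fact that $\xi^*$ maximizes the maximin \eqref{dist-problem},
\[
	c(\tilde Z) = c(Z^*) = \sup_{\xi \in \Xi}\,\inf_{Z \sim F}\E\bigl[\xi Z\bigr] = \inf_{Z \sim F}\E\bigl[\xi^* Z\bigr].
\]
The sandwich $c(\tilde Z) = \sup_{\xi \in \Xi}\E[\xi \tilde Z] \geq \E[\xi^* \tilde Z] \geq \inf_{Z \sim F}\E[\xi^* Z] = c(\tilde Z)$ then forces $\E[\xi^* \tilde Z] = \inf_{Z \sim F}\E[\xi^* Z]$. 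With $F$ and $F_{\xi^*}$ both continuous, the Fr\'{e}chet--Hoeffding equality case recalled at the start of the proof of Proposition~\ref{prop1} pins down the minimizer uniquely almost surely as $F^{-1}\bigl(1 - F_{\xi^*}(\xi^*)\bigr) = Z^*$, so $\tilde Z = Z^*$ a.s. The same argument applies verbatim to the convexified minimax problem \eqref{minimaxpb} after noting that $Z^* \in \convc(F)$ and the value functions coincide; alternatively one may simply invoke the uniqueness clause of Corollary~\ref{cor:maxminminmax}, whose hypotheses are met here.

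The main obstacle is purely bookkeeping: one must keep separate the role of $\xi^*$ (as a maximin maximizer, supplied by hypothesis) and the role of $Z^*$ (as a minimax minimizer, which is the conclusion), and verify that continuity of $\xi^*$ eliminates the need for the randomization $\hat U_{\xi^*}$ coming from Proposition~\ref{CEsolu_prop}, while continuity of $F$ eliminates the ambiguity in the Fr\'{e}chet--Hoeffding comonotonic coupling. Once both effects are isolated, the chain of equalities from Corollary~\ref{co3.17} transfers both existence and uniqueness from the cleaner maximin formulation to the minimax one.
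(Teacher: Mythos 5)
The paper gives no written proof of this corollary; it is meant to follow by combining Proposition~\ref{CEsolu_prop} (existence, with the distributional transform collapsing to $F_{\xi^*}(\xi^*)$ when $\xi^*$ is continuously distributed) with the uniqueness clause of Corollary~\ref{cor:maxminminmax}. Your existence half reproduces exactly that intended route and is correct. Your uniqueness half for the minimax problem \eqref{dist0} is also sound and, in fact, sharper than what the paper supplies: the sandwich $c(\tilde Z)=\sup_{\xi}\E[\xi\tilde Z]\ge\E[\xi^*\tilde Z]\ge\inf_{Z\sim F}\E[\xi^*Z]=c(\tilde Z)$ (valid because perfect cost-efficiency of $F$ forces equality of the four values in \eqref{CEFF}) together with the Fr\'{e}chet--Hoeffding equality case for continuous $F_{\xi^*}$ pins down $\tilde Z=Z^*$ a.s.\ among \emph{all} minimizers with $\tilde Z\sim F$, not merely those of the randomized-quantile form. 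One citation is slightly off: the equality of the four values should be attributed to the $(ii)\Rightarrow(iv)$ step of Theorem~\ref{thD2} (or derived directly from the definition of perfect cost-efficiency plus Proposition~\ref{P2.5}), not to Corollary~\ref{co3.17}, which presupposes a preference optimizer.

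The genuine gap is the claim that the same argument applies ``verbatim'' to the convexified problem \eqref{minimaxpb}. The Fr\'{e}chet--Hoeffding equality case characterizes minimizers of $Z\mapsto\E[\xi^*Z]$ only within the class $\{Z\sim F\}$, whereas a minimizer of \eqref{minimaxpb} a priori lies in $\convc(F)$ and need not have cdf $F$ (this is precisely the phenomenon the paper exhibits in the non-efficient columns of Table~\ref{tab:optimizer}). For a finite convex combination $\tilde Z=\sum_i\lambda_iZ_i$ with $Z_i\sim F$ the fix is easy by linearity: $\E[\xi^*\tilde Z]=\sum_i\lambda_i\E[\xi^*Z_i]$ attains $\inf_{Z\sim F}\E[\xi^*Z]$ only if every $Z_i$ with $\lambda_i>0$ does, hence $Z_i=Z^*$ a.s.\ and $\tilde Z=Z^*$. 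But for a general element of the \emph{closure} $\convc(F)$ you need an additional limiting step (e.g.\ a semicontinuity or uniform integrability argument along the approximating sequence, or the identification $\convc(F)=\{Y\preceq_{cx}F\}$ from Proposition~\ref{cxorder} followed by a strict-inequality argument for $Y$ strictly below $F$ in convex order), and none is supplied. The proposed fallback to Corollary~\ref{cor:maxminminmax} does not close this gap either, since that corollary only asserts uniqueness among solutions ``of this type,'' i.e.\ of the form $F^{-1}\bigl(1-\hat F_{\xi}(\xi;U_\xi)\bigr)$, which is strictly weaker than the full uniqueness asserted in the statement.
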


More specifically,  the first part of the solution pair $(Z^*, \xi^*)$ is for continuously distributed $\xi^*$ always unique, but in the case of a discontinuous cdf $F$ there might be more than one (generalized) pricing kernel reaching the optimum.

We can combine Proposition \ref{CEsolu_prop} with Proposition \ref{prop1} to conclude.

\begin{theorem}\label{CEsolu}
	Let $F$ be a distribution such that there exists a perfectly cost-efficient $Z\in\mathcal{X} \cap D(F)$. Then the problem \eqref{cvxminimaxA} has a solution  $(Z^*, \xi^*, U) \in \mathcal{Z} \times \mathcal{Y} \times  \mathcal{U}$ (as defined in \eqref{7b}), where $Z^*$ can be expressed as
	\begin{equation}\label{counter}
		Z^* = F^{-1}(1-\hat{F}_{\xi^*}(\xi^*; U))
	\end{equation}
	for $\xi^* \in \Xibar$ and $U$ standard uniform and independent of $\Xibar$.
\end{theorem}

This theorem is directly related to Theorem 3 on page 354 of \cite{jouini2001efficient} in which the probability space $\Omega$ is finite   and where  \eqref{counter} is written in more explicit form by stating that the claim values (here $Z^*$) are in reverse order compared to the pricing kernel (here $\xi^*$).

\begin{remark}\label{optimizer_char}
	The generalized pricing kernels $\xi^*$ in the optimizers of any of the problems are not necessarily true pricing kernels: neither are they necessarily almost surely strictly positive, nor is their expectation necessarily one, only $\E[\xi^*] \leq 1$ follows from Fatou's lemma.  
\end{remark}

\begin{example} \label{coexa0}
	The optimum to \eqref{dist-problem} is not always a function of any pricing kernel as can be seen in the following example. Consider a Black--Scholes model in which $\mu=r=0$ then the only pricing kernel $\xi$ is equal to 1 a.s. Let $F_{S}$ be the cdf of the risky asset at time $T$. It is continuously distributed (lognormal). Let $(Z^*,\xi^*)$ be a solution to \eqref{dist-problem}. Then the expression \eqref{solutionCONT} cannot be an optimum as $\xi^*=\xi$ does not have a continuous distribution.
\end{example}

The following proposition links our findings to the notion of attainability, i.e., the existence of a hedging strategy that perfectly replicates the payoff.

\begin{proposition}\label{attn}
	Consider a payoff $X^* \in \mathcal{X}$ distributed with a continuous cdf $F$. If $X^*$ is attainable and can be written as $F^{-1}\bigl(1-\hat{F}_{\xi}(\xi; U)\bigr)$ for some $\xi \in \Xibar$ and $U$ standard uniformly distributed and independent of $\Xibar$, then it is perfectly cost-efficient.
\end{proposition}

\begin{proof}
	As $X^*$ is attainable, we know that $\E\bigl[\zeta X^*\bigr]$ is constant and independent of $\zeta \in \Xi.$ Thus
	\[
		\sup_{\zeta \in \Xi} \E\bigl[\zeta X^*] = \E\bigl[\xi X^*] = \E\bigl[\xi F^{-1}\bigl(1-\hat{F}_{\xi}(\xi; U)\bigr)] = \inf_{\substack{Z \in \mathcal{X} \\ Z \in D(F)}} \E\bigl[\xi Z] \leq \sup_{\zeta \in \Xi} \inf_{\substack{Z \in \mathcal{X} \\ Z \in D(F)}} \E\bigl[\zeta Z],
	\]
	so equality holds in \eqref{CEFF}. Proposition~\ref{P2.5} implies that there exists $\xi^* \in \Xibar$, $U$ independent of $\Xibar$ and standard uniform that $F^{-1}(1-\hat{F}_{\xi^*}(\xi^*; U))$ is a solution to the maximin problem \eqref{dist-problem}. Proposition~\ref{CEsolu_prop} in turn implies that it solves also the minimax problem. Thus $X^*$ is a solution to $\inf_{Z \in \mathcal{X}, \, Z \in D(F)} c(Z)$, i.e., perfectly cost-efficient.
\end{proof}

\section{Rationalizing Preferences in Incomplete Markets \label{app}}

On the one hand, we note that it is usually very difficult to determine in a non-parametric way the risk aversion of an investor and the specific utility function that she aims to optimize. On the other hand, \cite{SGB00, S07, goldstein2008choosing} show that it is possible to  ask investors to choose a desired distribution for their final wealth. This approach, also referred as the ``distribution builder'' approach, works well in a complete market as all distributions of wealth can be attained in a cost-efficient way, and may be obtained by maximizing some expected (concave) utility of terminal wealth. It thus reconciles the approach of \cite{goldstein2008choosing} with the expected utility theory by providing the specific concave  utility function that needs to be optimized to find out an optimal wealth process that has the desired distribution at maturity  (see \cite{BCV15} for its explicit expression in a complete market). However, the ``distribution builder'' approach faces limitations in incomplete markets if the ``desired target'' distribution is not attainable or can be proven to be not optimal for any rational investor. In this section, we extend these results in an incomplete market. This can be seen as an inverse of Theorem~\ref{theo1}: not only are preference optimizers perfectly cost-efficient, but (under slight regularity assumptions on the distributions involved), perfectly cost-efficient solutions are optimizers to some preference maximization problem. This, as well as some further equivalent formulations will be presented in Theorem~\ref{thD2}.

Specifically, we consider the following large class of utility functions in line with assumptions made in the literature (e.g., \cite{KS99,DS06}). We assume that ${u}:{\mathbb{R}_{> 0}}\rightarrow \R$ is increasing, continuously differentiable and strictly concave. We extend the utility function by setting $u(0) = \lim_{x \searrow 0} u(x)$ and $u(x) = -\infty$ for $x < 0$. We require the  so-called Inada conditions, i.e., the marginal utility tends to zero when wealth tends to infinity and  to infinity when wealth tends to zero, $\lim_{x\nearrow +\infty}u^{\prime }(x)=0$ and  $\lim_{x\searrow 0}u^{\prime}(x)=+\infty$. In addition, the asymptotic elasticity is assumed to be strictly less than 1, i.e., $\limsup_{x\rightarrow+\infty}\frac{xu^\prime(x)}{u(x)}<1.$   Examples of such utility functions are $u(x)=\ln(x)$ or $u(x)=x^p/p$ for $p<1$, $p\neq0$ and $x>0$.

\begin{definition}[Rationalization by Expected Utility Theory]\label{rational}
	A portfolio choice $X^{\ast} \in D$ with a finite budget $x_0$ is \emph{rationalizable} by expected utility theory if there exists a utility function $ u$ as described above such that $X^{\ast}$ is also the optimal solution to 
	\begin{equation}\label{eu2b}
		\sup_{X\in\mathcal{D}_{x_0}}\E\bigl[ {u}(X)\bigr].
	\end{equation}
\end{definition}

\subsection{Link between Expected Utility Theory and Perfect Cost-Efficiency}

To set the stage we remind the reader of the results on the duality approach to portfolio optimization as given by \cite{KS99} on the optimal solution to the expected utility maximization in general incomplete markets.

\begin{lemma}[Duality Theorem for Expected Utility Maximization]\label{lem4.2}
	Consider a utility function ${u}$ as introduced above. The optimal solution $X^{\ast }$ to the expected
	utility maximization \eqref{eu2b} exists, is a.s. unique and is given by 
	\begin{equation*}
		X^{\ast }:=\bigl( {u}^{\prime }\bigr)^{-1}\bigl( \lambda^{\diamond}\xi^\diamond\bigr),
	\end{equation*}%
	where the constant $\lambda ^{\diamond}>0$ is chosen such that $\E[\xi^\diamond X^{\ast }] =x_{0}$, and where $\xi^\diamond$ is the optimal generalized pricing kernel in
	\[
		\inf_{y >0}\inf_{\xi \in \Xibar} \Bigl(\E\bigl[\widetilde{u}(y\xi)\bigr] + yx_0\Bigr),
	\]
	for $\widetilde{u}$ the Fenchel--Legendre transform of $u$ defined by 
	\[
		\widetilde{u}(y)=\sup_{x\in\mathbb{R}} \bigl(u(x)-xy\bigr).
	\]
\end{lemma}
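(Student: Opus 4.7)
The plan is to follow the standard convex-duality approach of Kramkov--Schachermayer, which this lemma is essentially a restatement of. The starting observation is the Fenchel--Young inequality: for any $X \in \mathcal{X}(x_0)$, any $\xi \in \Xibar$ and any $y > 0$, we have $u(X) \leq \widetilde{u}(y\xi) + y\xi X$, which upon taking expectations yields $\E[u(X)] \leq \E[\widetilde{u}(y\xi)] + y\,\E[\xi X] \leq \E[\widetilde{u}(y\xi)] + yx_0$, using that $\E[\xi X] \leq c(X) \leq x_0$ for $\xi \in \Xi$ (and extending to $\Xibar$ by Fatou as in Remark~\ref{optimizer_char}). Taking $\sup$ over $X$ on the left and $\inf$ over $(y,\xi)$ on the right gives weak duality between the primal \eqref{eu2c} and the stated dual problem.

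Next I would establish that the dual problem attains its infimum at some $(\lambda^\diamond,\xi^\diamond) \in (0,\infty) \times \Xibar$. The set $\Xibar$ is closed and convex in $\lnpofp$ and bounded in probability; the functional $\xi \mapsto \E[\widetilde{u}(y\xi)]$ is convex and lower semicontinuous in probability (by Fatou, since $\widetilde{u}$ is convex and bounded below on bounded sets away from $0$). Using the Inada conditions $\widetilde{u}'(0+) = -\infty$ and $\widetilde{u}'(\infty) = 0$ together with the asymptotic elasticity bound $\limsup_{x\to\infty} xu'(x)/u(x) < 1$ (which, as shown in \cite{KS99}, translates into a suitable growth bound on $\widetilde{u}$), one obtains coercivity of the dual in $y$ and compactness arguments that yield existence of a minimizer.

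Then I would prove strong duality and characterize the primal optimizer. Strict convexity of $\widetilde{u}$ on $(0,\infty)$ (coming from strict concavity of $u$) gives uniqueness of $\lambda^\diamond\xi^\diamond$, and the first-order condition is $\widetilde{u}'(\lambda^\diamond\xi^\diamond) = -X^*$, equivalently $X^* = (u')^{-1}(\lambda^\diamond\xi^\diamond)$ since $-\widetilde{u}' = (u')^{-1}$. A.s. uniqueness of $X^*$ then follows from strict concavity of $u$: if $X_1^*,X_2^*$ were two optima, $(X_1^*+X_2^*)/2 \in \mathcal{X}(x_0)$ by convexity of $\mathcal{X}(x_0)$ and would give strictly larger expected utility unless $X_1^*=X_2^*$ a.s. The Lagrange multiplier $\lambda^\diamond$ is pinned down by the budget: plugging $X^* = (u')^{-1}(\lambda^\diamond\xi^\diamond)$ into $\E[\xi^\diamond X^*] = x_0$ (which holds with equality because $u$ is increasing, so the superhedging constraint binds) determines $\lambda^\diamond$ uniquely, as the map $\lambda \mapsto \E[\xi^\diamond (u')^{-1}(\lambda\xi^\diamond)]$ is strictly decreasing from $+\infty$ to $0$ by the Inada conditions.

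The main obstacle is the existence part: verifying that the dual minimizer $\xi^\diamond$ lies in $\Xibar$ (rather than in some larger set of supermartingale deflators) and that the primal supremum is attained. In full generality Kramkov--Schachermayer work with the bipolar of $\Xi$ in $\lnpofp$, and the identification of this set with $\Xibar$ (or the required reduction to it) rests on the NFLVR assumption and the superreplication duality \eqref{SHP} already recorded; this is precisely the place where the asymptotic elasticity hypothesis is used in \cite{KS99}, and I would invoke their Theorems~2.1 and~2.2 to finish.
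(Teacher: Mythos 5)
The paper's own proof of this lemma is a bare citation to \cite[Theorem 2.2]{KS99} and \cite[Theorem 3.2]{BTZ04}; your proposal unpacks exactly the Kramkov--Schachermayer duality machinery behind that citation (Fenchel--Young weak duality, dual attainment via asymptotic elasticity, the first-order condition $-\widetilde{u}'=(u')^{-1}$, uniqueness from strict concavity) and then defers to the same theorems for the delicate existence step. This is essentially the same approach, and your explicit flagging of the identification of the dual domain with $\Xibar$ (versus the larger solid hull of supermartingale deflators) as the point where the cited results are genuinely needed is exactly the right place to lean on the references.
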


\begin{proof}
	See Theorem 2.2 in \cite{KS99} or Theorem 3.2 in \cite{BTZ04}.
\end{proof}

Using these results, we can now derive the main equivalence result.

\begin{theorem}	\label{thD2}
	Consider a terminal consumption $X^*$ purchased with an initial budget $x_{0}$ and distributed with a continuous cdf $F$ strictly increasing on $(0,\infty)$ and satisfying $F(x) =  1-F_{\xi^*} (\ell(x)/x^p)$  on some neighborhood $[\alpha, \infty)$ of infinity for some  $p < 1$ and $\ell$ slowly varying. Let $\xi^*$ be a generalized pricing kernel with continuous cdf $F_{\xi^*}$ strictly increasing on its support. Then the following conditions are equivalent:
	\begin{itemize}
		\item[(i)] $X^*$ is a solution to a portfolio maximization problem \eqref{problem} for some $V \, : \, \mathcal{D} \to \R \cup \{-\infty\}$ that is increasing, diversification-loving and upper semicontinuous.
	
		\item[(ii)] $X^*$ is perfectly cost-efficient with cdf $F$ and associated pricing kernel $\xi^*$ (i.e., $(X^*,\xi^*)$ is solving problem \eqref{cvxminimaxA}, Definition~\ref{def:costeffstrict}).
			
		\item[(iii)] $X^*$ is rationalizable by Expected Utility Theory (as per Definition~\ref{rational}).
			
		\item[(iv)] There is equality in \eqref{CEFF} for the cdf $F$, i.e.,
		\begin{equation}\label{EQUA}
			\sup_{\xi \in \Xi} \, \inf_{\substack{Z \in \mathcal{X} \\ Z \in D(F)}} \E\bigl[ \xi Z\bigr] =  \inf_{\substack{Z \in \mathcal{X} \\ Z \in D(F)}} \, \sup_{\xi \in \Xi}\E\bigl[ \xi Z\bigr],
		\end{equation}
		and $(X^*,\xi^*)$ is an optimizer to both problems.
			
		\item[(v)] $X^*$ with cdf $F$ is attainable and can be written as $F^{-1}(1-{F}_{\xi}(\xi))$ for some $\xi\in\Xibar$.
	\end{itemize}
\end{theorem}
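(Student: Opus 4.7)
The plan is to establish the equivalence via the cycle $(i) \Rightarrow (ii) \Rightarrow (iv) \Rightarrow (iii) \Rightarrow (v)$ together with $(iii) \Rightarrow (i)$, where $(iv) \Rightarrow (iii)$ does the real work.

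The easy steps go first. $(i) \Rightarrow (ii)$ is exactly Theorem~\ref{theo1}. For $(ii) \Rightarrow (iv)$, perfect cost-efficiency supplies $X^* \sim F$ with $c(X^*) = \inf_{Z \in \convc(F)} c(Z)$, so that the sandwich
\[
\inf_{Z \sim F} c(Z) \le c(X^*) = \inf_{Z \in \convc(F)} c(Z) \le \inf_{Z \sim F} c(Z)
\]
collapses the chain \eqref{CEFF} of Proposition~\ref{P2.5} into a single common value, and the definition of a solution to \eqref{cvxminimaxA} delivers $(X^*, \xi^*)$ as a joint optimizer of the two extremes. Conversely, $(iii) \Rightarrow (i)$ is immediate: setting $V(X) := \E\bigl[u(X)\bigr]$ gives a law-invariant, increasing, upper semicontinuous functional, and concavity of $u$ makes $V$ concave and hence diversification-loving by the remark following Definition~\ref{defDiv}.

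The heart of the theorem is $(iv) \Rightarrow (iii)$. From the saddle point $(X^*, \xi^*)$ supplied by $(iv)$, Theorem~\ref{CEsolu} together with continuity of $F_{\xi^*}$ forces $X^* = F^{-1}\bigl(1 - F_{\xi^*}(\xi^*)\bigr)$. Inverting this comonotonic relation suggests the candidate
\[
u'(y) := F_{\xi^*}^{-1}\bigl(1 - F(y)\bigr), \qquad u(y) := \int_0^y u'(z)\, dz,
\]
so that by construction $X^* = (u')^{-1}(\xi^*)$. Strict monotonicity and continuity of both $F$ and $F_{\xi^*}$ make $u'$ continuous, strictly positive and strictly decreasing, so $u$ is $C^1$, strictly concave and strictly increasing; the two Inada conditions translate into $F_{\xi^*}^{-1}(0^+) = 0$ and $F_{\xi^*}^{-1}(1^-) = +\infty$, which reflect that the pricing kernel $\xi^*$ takes values throughout $(0, \infty)$. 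The tail hypothesis $F(x) = 1 - F_{\xi^*}\bigl(\ell(x)/x^p\bigr)$ on $[\alpha, \infty)$ yields $u'(y) = \ell(y)/y^p$ on that neighborhood; Karamata's theorem then gives $u(y) \sim \ell(y)\, y^{1-p}/(1-p)$ as $y \to \infty$, so the asymptotic elasticity equals $1 - p < 1$. Combined with $\E[\xi^* X^*] = c(X^*) = x_0$ from $(iv)$, the verification direction of the Kramkov--Schachermayer duality (Lemma~\ref{lem4.2}) identifies $\xi^*$ as the dual optimizer at multiplier $\lambda^\diamond = 1$ and $X^*$ as the primal optimum, which is $(iii)$. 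To close the cycle through $(v)$, the step $(iii) \Rightarrow (v)$ is straightforward: the Kramkov--Schachermayer optimum is attainable by the replicating strategy produced by the duality theorem, and strict antimonotonicity of $X^* = (u')^{-1}(\lambda^\diamond \xi^\diamond)$ against a continuously distributed $\xi^\diamond$ forces $X^* = F^{-1}\bigl(1 - F_{\xi^\diamond}(\xi^\diamond)\bigr)$.

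The main obstacle is the construction step in $(iv) \Rightarrow (iii)$: writing $u'$ is mechanical, but verifying all four Inada/elasticity boundary conditions from the stated hypotheses requires the Karamata machinery, and one must also confirm that the candidate pair $(X^*, \xi^*)$ meets the full Kramkov--Schachermayer duality conditions, not just the first-order relation $X^* = (u')^{-1}(\xi^*)$, so that $X^*$ is truly the primal optimum and not only a candidate critical point. A secondary subtlety is that attainability appears only in $(v)$ and is not implied by the saddle point property alone; it is obtained only once the cycle passes through $(iii)$ and inherits the replicating strategy from Kramkov--Schachermayer.
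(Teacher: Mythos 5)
Your cycle for $(i)$, $(ii)$, $(iii)$, $(iv)$ mirrors the paper's argument in all essentials: the reduction $(i)\Rightarrow(ii)$ via Theorem~\ref{theo1}/Corollary~\ref{co3.17}, the collapse of the chain \eqref{CEFF} for $(ii)\Leftrightarrow(iv)$, and the construction $u'(y)=F_{\xi^*}^{-1}(1-F(y))$ with Karamata's theorem feeding the asymptotic elasticity condition and Lemma~\ref{lem4.2} identifying $X^*$ as the expected-utility optimizer. Hanging that construction on $(iv)$ rather than on $(ii)$ is a cosmetic difference, since the two are interchangeable once the sandwich argument is in place.

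However, there is a genuine logical gap concerning condition $(v)$: you prove $(iii)\Rightarrow(v)$ but never prove that $(v)$ implies any of the other four conditions, so your implication graph does not establish the claimed five-way equivalence --- $(v)$ is left as a consequence of the others rather than an equivalent reformulation. Saying that $(iii)\Rightarrow(v)$ ``closes the cycle through $(v)$'' is not correct; a cycle needs an arrow leaving $(v)$. The paper supplies exactly this missing arrow as $(v)\Rightarrow(ii)$ via Proposition~\ref{attn}: attainability makes $\E[\zeta X^*]$ constant over $\zeta\in\Xi$, which forces equality throughout \eqref{CEFF} and then, through Propositions~\ref{prop1} and~\ref{CEsolu_prop}, produces a perfectly cost-efficient payoff that coincides with $X^*$ when the optimizing kernel is continuously distributed. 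You would need to add this (or an equivalent) argument. A secondary, smaller gap: in $(iii)\Rightarrow(i)$ you assert upper semicontinuity of $V(X)=\E[u(X)]$ without justification; for unbounded $u$ this is not automatic from Fatou alone, and the paper has to decompose $u=u^+-u^-$ and invoke Vitali's convergence theorem with de la Vall\'{e}e Poussin's criterion (using the asymptotic elasticity bound) to handle the positive part.
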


\begin{proof}
	The proof follows from these observations:

	$(i)\Rightarrow(ii)$ by Theorem~\ref{theo1}. % and Corollary~\ref{co3.17}.
		
	$(ii)\Rightarrow (iv):$ If $X^*$ is perfectly cost-efficient, it solves \eqref{cvxminimaxA} by definition and has cdf $F$ and thus it also solves
	\[
		\inf_{\substack{Z \in \mathcal{X} \\ Z \in D(F)}} \, \sup_{\xi \in \Xi}\E\bigl[ \xi Z\bigr],
	\]
	therefore there is equality  in \eqref{CEFF}. Moreover, Corollary~\ref{cor:maxminminmax} asserts that it is also a solution to the maximin problem. 
		
	$(iv)\Rightarrow (ii):$ If there is equality  in \eqref{CEFF} and $X^*$ is defined as the solution to $\inf_{\substack{Z \in \mathcal{X} \\ Z \in D(F)}} c(Z),$ then it follows that it is perfectly cost-efficient. 
		
	$(ii)\Rightarrow (iii):$ We define the utility function
	\[
		{u}_{\xi^*}(x):= \int_{c}^{x}F_{\xi^*}^{-1}\bigl(1-F(y)\bigr) \, dy
	\]
	for some constant $c\in\R \cup \{+\infty\}$ and check directly (as $F$ and $F_{\xi^*}$ are continuous and strictly increasing) that this is a strictly increasing, strictly concave, continuously differentiable function that satisfies the Inada conditions.
		
	By Karamata's theorem, the asymptotic elasticity condition follows from that $F(x) = 1-F_{\xi^*} (\ell(x)/x^p)$ with $p<1$ and $\ell$ slowly varying. For $p \in [0,1)$ and $c \in \mathbb{R}$, this follows from Thm 1.5.11(i) in \cite{bingham1989regular} (with $\sigma=0$ and $\rho \in [-1,0)$), for $p <0$ and $c = + \infty$ from Thm 1.5.11(ii) in \cite{bingham1989regular} (with $\sigma=0$ and $\rho < -1$). This is actually close to a sufficient condition, as the inverse is implied (except for the case $p=0$) by Thm 1.6.1 of \cite{bingham1989regular} if the defining inequality in the definition of asymptotic elasticity holds for the limit (instead of the limit superior).

	Thus, we can apply Lemma~\ref{lem4.2}, the optimal solution to 
	\begin{equation}
		\sup_{X\in\mathcal{D}_{x_0}}\E\bigl[ {u_{\xi^*}}(X)\bigr]
	\end{equation}
	in which $x_{0}$ is the superhedging price of $F^{-1}\bigl(1-F_{\xi^*}(\xi^*)\bigr),$ writes as $\bigl( {u}_{\xi^*}^{\prime }\bigr)^{-1} \bigl( \lambda^{\diamond}\xi^\diamond\bigr)$. We then replace $\bigl( {u}_{\xi^*}^{\prime }\bigr)^{-1}$ by its explicit expression and we find that
	\[
		X^*=F^{-1}\bigl(1-F_{\xi^*}(\lambda^\diamond\xi^\diamond)\bigr).
	\]
	But $\lambda^\diamond=1$ because of the specific choice of $x_{0},$ and of the fact that $X^*$ is attainable; indeed, if it were not, there would be an attainable random variable $Z \in \mathcal{X}(x_0)$, $Z \geq X^*$, $\P[Z>X^*]>0$ contradicting the optimality of $X^*$ in the utility maximization problem, as $u_{\xi^*}$ is strictly increasing. As $F^{-1}\bigl(1-F_{\xi^*}(\cdot )\bigr)$ is strictly increasing (and thus injective), we have to have $\xi^\diamond = \xi^*$ a.s.
				
	$(iii)\Rightarrow (i):$ By choosing the preference functional $V(X)=\E[u(X)]$ one gets a function that is increasing and diversification-loving. In the case $\mathcal{D} = \lnofp$, $L^0$-upper semicontinuity is implied directly from Fatou's lemma as $u$ is necessarily bounded, implying that convergence in probability of a sequence $(X_n)$ implies $L^1$-convergence of the uniformly bounded sequence $u(X_n$). If $\mathcal{D} \subseteq \loofp$, we can decompose $u = u^+ - u^-$ and by the same Fatou argument $\E[u^-(\, \cdot \, )]$ is lower semicontinuous. $\E[u^+(\, \cdot \, )]$ is actually a continuous functional by Vitali's convergence theorem; the family $\{u^+(X) \, : \, X \in \mathcal{D} \subseteq \loofp\}$ is uniformly integrable by de la Vall\'{e}e Poussin's criterion with $\bigl(u^+\bigr)^{-1}$ acting as Young function. Indeed, the asymptotic elasticity condition implies by Lemma 6.5 of \cite{KS99} that
	\[
		\lim_{x \to \infty} \frac{u^+(x)}{x} = 0 \qquad \text{whence} \qquad \lim_{x \to \infty} \frac{\bigl(u^+\bigr)^{-1}(x)}{x} = +\infty.
	\]
		
	$(v)\Rightarrow (ii)$ by Proposition~\ref{attn}.
		
	$(iii) \, \& \, (iv) \Rightarrow (v):$  Assume $X^* \in \mathcal{X}(x_0)$ were not attainable. Then, by the definition of $\mathcal{X}(x_0),$  there exists an attainable $Y \in \mathcal{X}(x_0)$ such that $Y \geq X^*$ with $\P[Y>X^*]>0$. However, as per (iii) $X^*$ is rationalizable, there exists a strictly increasing utility function $u$ for which $X^*$ is the expected utility maximizer, hence $\E[u(Y)] > \E[u(X^*)]$, contradicting the optimality of $X^*$. Thus, $X^*$ has to be attainable.
		
	As $X^*$ is a solution to the problem \eqref{dist-problem} for some $\xi^* \in \Xibar$, we conclude by Proposition~\ref{prop1} that $X^* = F^{-1}\bigl(1-\hat{F}_{\xi^*}(\xi^*; U)\bigr)$, which simplifies as $\xi^*$ has continuous distribution.
\end{proof}

Note that in the special case when the distribution has finite mean, then additional equivalent statements are that  $X^*$ with cdf $F$ is a solution to a portfolio maximization problem \eqref{problem} for some $V$ that is upper semicontinuous, increasing and consistent with concave order.
	
\begin{remark}
	The equivalence result above can be slightly generalized. On the one hand, the domain of the utility function can be shifted to any interval $(a,\infty)$, $a \in \R$, see Remark 3.9 in \cite{BTZ04}. On the other hand, dropping the assumptions that $F_{\xi^*}$ is strictly increasing on its domain and $F$ is continuous while maintaining $\lim_{x \searrow a} F(x) =0$, we only lose uniqueness of the optimization problems. For the utility function $u_{\xi^*}$ this  loses the continuous differentiability while the dual is no longer strictly convex (but continuously differentiable). In this case the required duality result is given by Theorem 3.2 in \cite{BTZ04}, (noting that $\bigl( {u}^{\prime }\bigr)^{-1}$ has to be replaced by the equivalent $-\tilde{u}'$ as the utility function is no longer differentiable).   
\end{remark}

\begin{remark} In their paper, \cite{jouini2001efficient}  define efficiency as being an optimal solution for an expected utility maximization, which is directly related to  our notion of being rationalizable by Expected Utility Theory. Our results in Theorem \ref{thD2} are consistent with their findings in the case of a finite probability space $\Omega$. 
\end{remark}

\subsection{Perfectly Cost-efficient Distributions and Optimal Payoffs \label{PCED}}

While in complete markets, all payoffs are hedgeable per definition, and all distributions are hedgeable with a perfectly cost-efficient claim, this is not the case in incomplete markets. Specifically, given an arbitrary cdf $F$, there may not exist a cost-efficient claim $X^* \in D(F)$ such that $X^*$ is hedgeable. The subset of ``perfectly cost-efficient distributions'' is important as Corollary~\ref{co3.17} and Theorem~\ref{thD2} show  that only the payoffs that are perfectly cost-efficient are optimal for some portfolio choice problem as in \eqref{problem}.

The characterization of perfectly cost-efficient payoffs may thus be useful to describe the set of distributions that are hedgeable. An efficient portfolio choice would then always be among these hedgeable claims. Such a characterization can be useful to solve expected utility maximization problems. We illustrate in \cite{BS24appendix} %Appendix~\ref{ExampEU}
 how the characterization of perfectly cost-efficient claims allows us to solve an expected utility maximization problem in a simple incomplete market (trinomial model) and recover the results from Chapter 3 of \cite{DS06} obtained using duality.

Note however, that the implications of such a result for hedging claims are not easy to grasp as it is typically not enough to hedge in distribution. A good hedging strategy matches future cash flows of a target claim. Hedging the distribution of this target claim can be misleading when interpreted in the context of risk management. For example, a fire-insurance contract paying 100,000 euros with 10\% probability and a digital financial option paying 100,000 euros with 10\% probability have the same distribution but cannot be hedged in the same way (see \cite{bernard2014financial}).

\section{Conclusions}\label{S7}

In this paper we generalize results on cost-efficiency to general incomplete market settings. We extend the notion of cost-efficiency through relaxation in convex order and we introduce the notion of perfect cost-efficiency, i.e., cost-efficiency while maintaining the target distribution as the key notion. In this way preference optimizers of  increasing and upper semicontinuous preference functionals that enjoy the property to be diversification-loving (which we coined for this purpose) are perfectly cost-efficient. Additionally, perfectly cost-efficient payoffs are shown, under mild regularity assumptions, to be rationalizable in the expected utility framework, i.e., they are optimizers for an (incomplete market) expected utility maximization problem. Similar to maximizing the expected utility in an incomplete market via duality, the results hinge on the optimality of pricing kernels that depend on the distribution targeted (as the pricing kernel involved in the optimum depends on the utility function). In the companion paper \cite{BS24appendix} we provide a rich array of examples and counterexamples that highlight our findings and illustrate that the notions introduced are sharp, i.e., they cover all reasonable cases and without them the results hold no longer true. We hope that this conceptual clarification on cost-efficiency in incomplete markets will enable us to reproduce and extend many of the results in complete markets relying on cost-efficiency now, properly qualified, to incomplete markets. 

 \section*{Acknowledgments}
 
We would like to thank an associate editor for very detailed and constructive comments that helped us revise our paper. All remaining errors are ours. We thank Alfred M\"uller, Moris Strub and Ruodu Wang for their feedback on an earlier version of this paper. Also thanks to the Institute for Pure \& Applied Mathematics (IPAM), where at the Long Program on Broad Perspectives and New Directions in Financial Mathematics the seeds for this collaboration were laid.  C. Bernard acknowledges funding from  
FWO G093024N at VUB.
%\printbibliography

%\bibliographystyle{alpha}
\bibliographystyle{plainnat}
\bibliography{ICE_bib}

@article{BBV14,
	title={Explicit Representation of Cost-efficient Strategies},
	author={Bernard, Carole and Boyle, Phelim P. and Vanduffel, Steven},
	journal={Finance},
	volume={35},
	number={2},
	pages={5--55},
	year={2014},
	publisher={Presses univ. de Grenoble}
}

@article{BS24appendix,
  title={Examples and Counterexamples of Cost-efficiency in Incomplete Markets},
  author={Bernard, Carole and Sturm, Stephan},
  journal={ArXiv preprint arXiv:2407.08756},
  year={2024}
}

@article{dekel1989asset,
  title={Asset demand without the independence axiom},
  author={Dekel, Eddie},
  journal={Econometrica},
  pages={163--169},
  year={1989},
  publisher={JSTOR}
}

@article{jouini2001efficient,
  title={Efficient trading strategies in the presence of market frictions},
  author={Jouini, Ely\'{e}s and Kallal, H{\'e}di},
  journal={Review of Financial Studies},
  volume={14},
  number={2},
  pages={343--369},
  year={2001},
  publisher={Oxford University Press}
}

@article{chateauneuf2002diversification,
  title={Diversification, convex preferences and non-empty core in the Choquet expected utility model},
  author={Chateauneuf, Alain and Tallon, Jean-Marc},
  journal={Economic Theory},
  volume={19},
  pages={509--523},
  year={2002},
  publisher={Springer}
}

@article{carlier2011optimal,
  title={Optimal demand for contingent claims when agents have law invariant utilities},
  author={Carlier, Guillaume and Dana, Rose-Anne},
  journal={Math. Finance},
  volume={21},
  number={2},
  pages={169--201},
  year={2011},
  publisher={Wiley Online Library}
}

@article{schied2004neyman,
  title={On the {N}eyman--{P}earson problem for law-invariant risk measures and robust utility functionals},
  author={Schied, Alexander},
  journal={The Annals of Applied Probability},
  volume={14},
  number={3},
  pages={1398--1423},
  year={2004},
  publisher={Institute of Mathematical Statistics}
}

@article{carlier2008two,
  title={Two-persons efficient risk-sharing and equilibria for concave law-invariant utilities},
  author={Carlier, Guillaume and Dana, Rose-Anne},
  journal={Economic Theory},
  volume={36},
  number={2},
  pages={189--223},
  year={2008},
  publisher={Springer}
}

@article{carlier2006law,
  title={Law invariant concave utility functions and optimization problems with monotonicity and comonotonicity constraints},
  author={Carlier, Guillaume and Dana, Rose-Anne},
  journal={Statistics \& Risk Modeling},
  volume={24},
  number={1},
  pages={127--152},
  year={2006},
  publisher={De Gruyter (A)}
}

@article{BCV15,
	title={Rationalizing Investors' Choices},
	author={Bernard, Carole and Chen, Jit Seng and Vanduffel, Steven},
	JOURNAL = {J. Math. Econom.},
	FJOURNAL = {Journal of Mathematical Economics },
	volume={59},
	pages={10--23},
	year={2015}
}

@book{bingham1989regular,
  title={Regular Variation},
  author={Bingham, Nicholas H. and Goldie, Charles M. and Teugels, Jozef L.},
  year={1989},
  publisher={Cambridge University Press}
}

@article {BK17,
	AUTHOR = {Bank, Peter and Kauppila, Helena},
	TITLE = {Convex duality for stochastic singular control problems},
	JOURNAL = {Ann. Appl. Probab.},
	FJOURNAL = {The Annals of Applied Probability},
	VOLUME = {27},
	YEAR = {2017},
	NUMBER = {1},
	PAGES = {485--516},
	ISSN = {1050-5164},
	MRCLASS = {93E20 (60H30 90C25 90C48 91B16 91G80)},
	MRNUMBER = {3619793},
	DOI = {10.1214/16-AAP1209},
	URL = {https://doi.org/10.1214/16-AAP1209},
}

@article {bellini2021,
	AUTHOR = {Bellini, Fabio and Koch--Medina, Pablo and Munari, Cosimo and
	Svindland, Gregor},
	TITLE = {Law-Invariant Functionals on General Spaces of
	Random Variables},
	JOURNAL = {SIAM J. Financial Math.},
	FJOURNAL = {SIAM Journal on Financial Mathematics},
	VOLUME = {12},
	YEAR = {2021},
	NUMBER = {1},
	PAGES = {318--341},
	MRCLASS = {46 (46A20 46N30 60 91G70)},
	MRNUMBER = {4224753},
	DOI = {10.1137/20M1341258},
	URL = {https://doi.org/10.1137/20M1341258},
}

@incollection {BS99,
	AUTHOR = {Brannath, Werner and Schachermayer, Walter},
	TITLE = {A bipolar theorem for {$L^0_+(\Omega,\mathcal{F},\mathbb{P})$}},
	BOOKTITLE = {S\'eminaire de {P}robabilit\'es, {XXXIII}},
	SERIES = {Lecture Notes in Math.},
	VOLUME = {1709},
	PAGES = {349--354},
	PUBLISHER = {Springer, Berlin},
	YEAR = {1999},
	MRCLASS = {46A55 (28A99 46A40 46E30 91B28)},
	MRNUMBER = {1768009 (2001d:46019)},
	DOI = {10.1007/BFb0096525},
	URL = {http://dx.doi.org/10.1007/BFb0096525},
}

@article {BS14,
	AUTHOR = {Bichuch, Maxim and Sturm, Stephan},
	TITLE = {Portfolio optimization under convex incentive schemes},
	JOURNAL = {Finance Stoch.},
	FJOURNAL = {Finance and Stochastics},
	VOLUME = {18},
	YEAR = {2014},
	NUMBER = {4},
	PAGES = {873--915},
	ISSN = {0949-2984},
	MRCLASS = {91G10 (60H30 90C15)},
	MRNUMBER = {3255755},
	MRREVIEWER = {Piotr Nowak},
	DOI = {10.1007/s00780-014-0236-9},
	URL = {http://dx.doi.org/10.1007/s00780-014-0236-9},
}

@article {BTZ04,
	AUTHOR = {Bouchard, Bruno and Touzi, Nizar and Zeghal, Amina},
	TITLE = {Dual formulation of the utility maximization problem: the case
	of nonsmooth utility},
	JOURNAL = {Ann. Appl. Probab.},
	FJOURNAL = {The Annals of Applied Probability},
	VOLUME = {14},
	YEAR = {2004},
	NUMBER = {2},
	PAGES = {678--717},
	ISSN = {1050-5164},
	MRCLASS = {91B26 (49J52 49N15 60H30 91B28 91B62 93E20)},
	MRNUMBER = {2052898 (2005a:91042)},
	MRREVIEWER = {Juan-Enrique Mart{\'{\i}}nez-Legaz},
	DOI = {10.1214/105051604000000062},
	URL = {http://dx.doi.org/10.1214/105051604000000062},
}

@article{bernard2014financial,
	title={Financial bounds for insurance claims},
	author={Bernard, Carole and Vanduffel, Steven},
	journal={J. Risk Insurance},
	volume={81},
	number={1},
	pages={27--56},
	year={2014},
	publisher={Wiley Online Library}
}

@article{chateauneuf2007sure,
	title={From sure to strong diversification},
	author={Chateauneuf, Alain and Lakhnati, Ghizlane},
	journal={Econom. Theory},
	volume={32},
	number={3},
	pages={511--522},
	year={2007},
	publisher={Springer}
}

@article {Dana2005,
	AUTHOR = {Dana, Rose-Anne},
	TITLE = {A representation result for concave {S}chur concave functions},
	JOURNAL = {Math. Finance},
	FJOURNAL = {Mathematical Finance. An International Journal of Mathematics,
	Statistics and Financial Economics},
	VOLUME = {15},
	YEAR = {2005},
	NUMBER = {4},
	PAGES = {613--634},
	ISSN = {0960-1627},
	MRCLASS = {28A12 (91B30)},
	MRNUMBER = {2168522},
	MRREVIEWER = {Volker Kr\"{a}tschmer},
	DOI = {10.1111/j.1467-9965.2005.00253.x},
	URL = {https://doi.org/10.1111/j.1467-9965.2005.00253.x},
}

@article {delbaen2009,
	AUTHOR = {Delbaen, Freddy},
	TITLE = {Risk measures for non-integrable random variables},
	JOURNAL = {Math. Finance},
	FJOURNAL = {Mathematical Finance. An International Journal of Mathematics,
	Statistics and Financial Economics},
	VOLUME = {19},
	YEAR = {2009},
	NUMBER = {2},
	PAGES = {329--333},
	ISSN = {0960-1627},
	MRCLASS = {91B30 (60K10)},
	MRNUMBER = {2509290},
	DOI = {10.1111/j.1467-9965.2009.00370.x},
	URL = {https://doi.org/10.1111/j.1467-9965.2009.00370.x},
}

@book {DS06,
	AUTHOR = {Delbaen, Freddy and Schachermayer, Walter},
	TITLE = {The Mathematics of Arbitrage},
	SERIES = {Springer Finance},
	PUBLISHER = {Springer-Verlag, Berlin},
	YEAR = {2006},
	PAGES = {xvi+373},
	ISBN = {978-3-540-21992-7; 3-540-21992-7},
	MRCLASS = {91-02 (60H10 60H30 91B28)},
	MRNUMBER = {2200584 (2007a:91001)},
	MRREVIEWER = {Angelos Dassios},
}

@article{D88a,
	author = {Philip H. {Dybvig}},
	journal = {J. Business},
	title = {Distributional Analysis of Portfolio Choice},
	year = {1988},
	volume = {61},
	number = {3},
	pages = {369-393}
}

@article{D88b,
	author = {Philip H. {Dybvig}},
	journal = {Rev. Financ. Stud.},
	title = {Inefficient Dynamic Portfolio Strategies or How to Throw Away a Million Dollars in the Stock Market},
	year = {1988},
	volume = {1},
	number = {1},
	pages = {67-88}
}

@article{goldstein2008choosing,
	title={Choosing outcomes versus choosing products: Consumer-focused retirement investment advice},
	author={Goldstein, Daniel G. and Johnson, Eric J. and Sharpe, William F.},
	journal={J. Consum. Res.},
	volume={35},
	number={3},
	pages={440--456},
	year={2008},
	publisher={The University of Chicago Press}
}

@article{he2016risk,
	title={Risk reducers in convex order},
	author={He, Junnan and Tang, Qihe and Zhang, Huan},
	journal={Insurance Math. Econom.},
	volume={70},
	pages={80--88},
	year={2016},
	publisher={Elsevier}
}

@article{HZ11,
	author = {Xuedong {He} and Xun Yu {Zhou}},
	journal = {Math. Finance},
	title = {Portfolio Choice Via Quantiles},
	year = {2011},
	volume = {21},
	number = {2},
	pages = {203-231}
}

@article{JZ08,
	author = {Hanqing {Jin} and Xun Yu {Zhou}},
	journal = {Math. Finance},
	title = {Behavioral Portfolio Selection in Continous Time},
	year = {2008},
	volume = {18},
	number = {3},
	pages = {385-426}
}

@article {K96,
	AUTHOR = {Dmitry {Kramkov}},
	TITLE = {Optional decomposition of supermartingales and hedging
	contingent claims in incomplete security markets},
	JOURNAL = {Probab. Theory Related Fields},
	FJOURNAL = {Probability Theory and Related Fields},
	VOLUME = {105},
	YEAR = {1996},
	NUMBER = {4},
	PAGES = {459--479},
	ISSN = {0178-8051},
	MRCLASS = {60G44 (60G07 60H05 90A60)},
	MRNUMBER = {1402653},
	MRREVIEWER = {Walter Schachermayer},
	URL = {https://doi.org/10.1007/BF01191909},
}

@article {KS99,
	AUTHOR = { Dmitry {Kramkov} and Walter {Schachermayer}},
	TITLE = {The asymptotic elasticity of utility functions and optimal
	investment in incomplete markets},
	JOURNAL = {Ann. Appl. Probab.},
	FJOURNAL = {The Annals of Applied Probability},
	VOLUME = {9},
	YEAR = {1999},
	NUMBER = {3},
	PAGES = {904--950},
	ISSN = {1050-5164},
	MRCLASS = {91B28 (60G99 60H99)},
	MRNUMBER = {1722287 (2001d:91099)},
	MRREVIEWER = {John P. Lehoczky},
	DOI = {10.1214/aoap/1029962818},
	URL = {http://dx.doi.org/10.1214/aoap/1029962818},
}

@article{rothschild1970increasing,
	title={Increasing risk: {I}. {A} definition},
	author={Rothschild, Michael and Stiglitz, Joseph E.},
	journal={J. Econom. Theory},
	volume={2},
	number={3},
	pages={225--243},
	year={1970},
	publisher={Academic Press}
}

@article{rothschild1971increasing,
	title={Increasing risk: {II}. {I}ts economic consequences},
	author={Rothschild, Michael and Stiglitz, Joseph E.},
	journal={J. Econom. Theory},
	volume={3},
	number={1},
	pages={66--84},
	year={1971},
	publisher={Elsevier}
}

@book{S07,
	ISBN = {9780691138503},
	URL = {http://www.jstor.org/stable/j.ctt7sb7b},
	author = {William F. Sharpe},
	publisher = {Princeton University Press},
	title = {Investors and Markets: Portfolio Choices, Asset Prices, and Investment Advice},
	year = {2007}
}

@article{SGB00,
	author = {William F. Sharpe and Daniel G. Goldstein and Philip W. Blythe},
	title = {The Distribution Builder: A Tool for Inferring Investor Preferences},
	JOURNAL = {Technical report},
	year = {2000}
}

@article{WPY,
	title={Bounds for the sum of dependent risks and worst {V}alue-at-{R}isk with monotone marginal densities},
	author={Wang, Ruodu and Peng, Liang and Yang, Jingping},
	JOURNAL = {Finance Stoch.},
	FJOURNAL = {Finance and Stochastics},
	volume={17},
	pages={395--417},
	year={2013}
}

@article {Z10,
    AUTHOR = {{\v{Z}}itkovi{\'c}, Gordan},
     TITLE = {Convex compactness and its applications},
   JOURNAL = {Math. Financ. Econ.},
  FJOURNAL = {Mathematics and Financial Economics},
    VOLUME = {3},
      YEAR = {2010},
    NUMBER = {1},
     PAGES = {1--12},
      ISSN = {1862-9679},
   MRCLASS = {46N10 (47H10 60G99)},
  MRNUMBER = {2651515 (2012b:46159)},
       DOI = {10.1007/s11579-010-0024-z},
       URL = {http://dx.doi.org/10.1007/s11579-010-0024-z},
}
\newpage

\appendix

\begin{center}
{\LARGE \textbf{Appendix}}
\end{center}	

\section{Additional Information on $\convc(F)$ \label{appconvF}}

A key element in the analysis is the closed convex hull of a distribution, $\convc(F)$ which extends the notion of convex domination by $F$ also to distributions with infinite mean. As this is a quite curious object by itself, we want to present some results that help to understand better this object, in particular in the case of infinite mean.

\begin{lemma}\label{integrability}
	Let $F$ be a distribution supported on the non-negative reals $\R_{\geq 0}$. $F$ has a finite mean if and only if the convex closure $\convc(F)$ is bounded in probability.
\end{lemma}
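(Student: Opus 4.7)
The plan is to prove the two directions separately, using Proposition~\ref{cxorder} for the ``easy'' forward direction and a law-of-large-numbers argument for the reverse.

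For the forward direction ($F$ integrable $\Rightarrow \convc(F)$ bounded in probability), I would invoke Proposition~\ref{cxorder} directly: letting $X \sim F$, we have $X \in \loofp$ (since $F$ is integrable) and $F$ has support bounded from below in $\R_{\geq 0}$, so the proposition identifies $\convc(F)$ with $\{Y \in \mathcal{X} \, : \, Y \preceq_{cx} X\}$. Applying the definition of convex order to the convex function $\varphi(x) = |x|$ then yields $\E[|Y|] \leq \E[|X|] = \E[X] < \infty$ uniformly in $Y \in \convc(F)$. Thus $\convc(F)$ is bounded in $L^1$, and boundedness in probability follows immediately from Markov's inequality.

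For the reverse direction, I would argue by contraposition: assume $F$ is not integrable (so $\E[X] = +\infty$ for $X \sim F$) and construct an unbounded sequence in $\convc(F)$. Since $\ofp$ is atomless and standard Borel, we may realize on it an i.i.d.\ sequence $(X_i)_{i\geq 1}$ with common distribution $F$. The averages
\[
    Y_n := \frac{1}{n} \sum_{i=1}^n X_i
\]
lie in $\conv(F) \subseteq \convc(F)$. The key claim is that $Y_n \to +\infty$ almost surely. To establish this, I would truncate: for each $M > 0$, set $X_i^M := X_i \wedge M$; these are i.i.d.\ bounded variables, so the classical SLLN gives $\frac{1}{n}\sum_{i=1}^n X_i^M \to \E[X_1^M]$ a.s. Since $X_i \geq X_i^M$, we deduce $\liminf_{n \to \infty} Y_n \geq \E[X_1^M]$ a.s., and monotone convergence gives $\E[X_1^M] \nearrow \E[X_1] = +\infty$ as $M \to \infty$. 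Combining these, $\liminf_n Y_n = +\infty$ a.s., hence $\P[Y_n > K] \to 1$ for every $K > 0$, so $\convc(F)$ is not bounded in probability.

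Neither direction poses a genuine obstacle; the only point requiring care is the reverse implication, where one must not appeal to SLLN with an infinite expectation directly but rather obtain the divergence of $Y_n$ via truncation and monotone convergence. A minor bookkeeping point is to check that the ambient probability space is rich enough to support an i.i.d.\ $F$-sequence, which follows from the standing assumption that $\ofp$ is atomless and standard Borel (cf.\ \cite[Proposition A.31]{FS16}, as is invoked elsewhere in the paper).
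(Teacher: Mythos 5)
Your proof is correct. The forward direction is essentially the paper's argument (a uniform $L^1$ bound on the convex hull followed by Markov's inequality), though you obtain the bound by citing Proposition~\ref{cxorder} and applying convex order with $\varphi(x)=|x|$ rather than by computing directly with convex combinations; this has the small advantage of covering the closure $\convc(F)$ at no extra cost. The reverse direction, however, is a genuinely different route. The paper argues by contradiction: it invokes the Brannath--Schachermayer result that a convex family in $\lnpofp$ that is bounded in probability is uniformly integrable under some $\Q\sim\P$, and then exhibits an $F$-distributed random variable comonotone with $d\Q/d\P$ whose $\Q$-expectation is infinite. Your argument is constructive and more elementary: an i.i.d.\ $F$-sequence, equally weighted averages, and the truncated strong law of large numbers show directly that $\conv(F)$ contains a sequence diverging to $+\infty$ almost surely, hence not bounded in probability. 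What the paper's route buys is that it stays within the convex-duality toolkit used elsewhere and needs no explicit coupling; what yours buys is self-containedness, and it dovetails nicely with Proposition~\ref{propW}, which studies exactly such equally weighted sums under optimal rather than independent coupling. One shared caveat: if one insists on the definition $\convc(F)=\overline{\conv\bigl(\{X\in\mathcal{X}\,:\,X\sim F\}\bigr)}^{L^0}$ from Section~\ref{S3}, both your i.i.d.\ copies and the paper's comonotone witness would additionally have to be checked to lie in $\mathcal{X}$; the appendix treats $\convc(F)$ as the convex closure of all $F$-distributed random variables, under which both arguments go through as written.
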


\begin{proof}
	The forward direction is a direct consequence of Markov's inequality. Assume that $F$ has a finite mean, then
	\begin{align*}
		\lim_{M \to \infty} \sup_{X \in \convc(F)} \P\bigl[ \vert X \vert \geq M \bigr] & \leq  \lim_{M \to \infty} \sup_{X \in \convc(F)} \frac{\E\bigl[ \vert X \vert \bigr]}{M} =  \lim_{M \to \infty} \sup_{\substack{X_i \in D(F)}} \sup_{\substack{\lambda_i \geq 0, \, n \geq 1 \\ \sum_{i=1}^n \lambda_i = 1}} \frac{\E\bigl[ \vert \sum_{j=1}^n \lambda_j X_j \vert \bigr]}{M} \\
		& =  \lim_{M \to \infty} \sup_{\substack{X_i \in D(F)}} \sup_{\substack{\lambda_i \geq 0, \, n \geq 1 \\ \sum_{i=1}^n \lambda_i = 1}} \sum_{j=1}^n \lambda_j \frac{\E\bigl[ X_j  \bigr]}{M} = \lim_{M \to \infty} \frac{1}{M} \int_0^\infty  x \, dF(x) = 0,
	\end{align*}
	which proves boundedness in probability.
	
	For the backward direction, assume that $F$ has infinite mean. Let us recall that if a convex family of random variables in $\lnpofp$ is bounded in probability, there exists a measure $\Q$ equivalent to $\P$ such that the family is uniformly integrable under $\Q$ (see Lemma 2.3 (3) in \cite{BS99}). Denoting by $F^-$ the left continuous version of $F$, we have moreover $\frac{d\Q}{d\P} \geq \bigl(F^{-1}_{\frac{d\Q}{d\P}} \circ F^- \bigr)(Y)$, and thus
	\begin{align*}
		\E^\Q[Y] &= \E \biggl[ Y \frac{d\Q}{d\P} \biggr] \geq \E \Bigl[ Y \Bigl(F^{-1}_{\frac{d\Q}{d\P}} \circ F^- \Bigr)(Y) \Bigr] = \int_{[0,\infty)} y \Bigl(F^{-1}_{\frac{d\Q}{d\P}} \circ F^- \Bigr)(y) \, dF(y) \\
		& \geq  \int_{\bigl(F^{-1} \circ F_{\frac{d\Q}{d\P}}\bigr)(1,\infty\bigr)} y \Bigl(F^{-1}_{\frac{d\Q}{d\P}} \circ F^- \Bigr)(y) \, dF(y) \geq \int_{\bigl(F^{-1} \circ F_{\frac{d\Q}{d\P}}\bigr)(1,\infty\bigr)} y  \, dF(y) = \infty.
	\end{align*}
	Thus, $Y$ is not $\Q$-integrable and hence $\convc(F)$ cannot be bounded in probability for $F$ with infinite mean.
\end{proof}

A slightly weaker but more intuitive statement for the converse direction of Lemma~\ref{integrability} can be obtained when we recast the problem in terms of optimal couplings. To do so, we build on recent results of \cite{WPY} on joint mixability. We have the following proposition.

\begin{proposition} \label{propW}
	Let  $F$ be a cdf that has a decreasing density, then Lemma 2.2 of \cite{WPY} shows that there exists an optimal coupling $\tilde{X}_1, \ldots, \tilde{X}_n \in D(F)$ with $\tilde{S}_n = \tilde{X}_1 + \cdots+  \tilde{X}_n$ such that
	\begin{align*}
		m_+^n(s)  & := \inf\Bigl\{\P[S_n<s] \, : \, X_i \in D(F), \, \sum_{i=1}^n X_i = S_n \Bigr\} \\
		& = 1- \sup\Bigl\{\P[S_n \geq s] \, : \, X_i \in D(F) , \, \sum_{i=1}^n X_i = S_n\Bigr\}= \P\bigl[ \tilde{S}_n <s\bigr].
	\end{align*}
	Assuming that $F$ satisfies $\lim_{n \to \infty} \frac{1}{n} F^{-1}(1-\frac{1}{n}) = \infty$, then, for every $M >0$
	\[
		\lim_{n \to \infty} \P\bigl[ \tilde{S}_n  \geq nM\bigr] = 1.
	\]
\end{proposition}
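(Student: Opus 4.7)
My plan is to reformulate the conclusion in terms of $m_+^n$ and then exhibit an explicit coupling of $n$ copies of $F$ that forces the sum to be large. Since by construction $\P[\tilde{S}_n \geq nM] = 1 - m_+^n(nM)$, the statement is equivalent to showing $m_+^n(nM) \to 0$ for every $M > 0$. As $m_+^n(nM)$ is the infimum of $\P[S_n < nM]$ over all couplings $(X_1,\ldots, X_n)$ of $n$ copies of $F$, it suffices to produce a single such coupling whose sum fails to fall below $nM$ with probability tending to one.

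The key construction is a \emph{rotated inverse-cdf} coupling. Using atomlessness of $(\Omega, \F, \P)$, pick a random variable $U \sim \mathcal{U}(0,1)$ and partition $[0,1)$ into $n$ equiprobable blocks $A_i = [(i-1)/n, i/n)$. For each $i \in \{1,\ldots, n-1\}$, let $\sigma_i \colon [0,1) \to [0,1)$ be the measure-preserving involution that swaps $A_i$ with $A_n = [1-1/n, 1)$ and leaves every other block pointwise fixed, and let $\sigma_n$ be the identity. Set $X_i^{(n)} := F^{-1}(\sigma_i(U))$. Then each $X_i^{(n)} \sim F$, and on the event $A_i$ we have $\sigma_i(U) \in [1-1/n, 1)$, so $X_i^{(n)} \geq F^{-1}(1-1/n)$. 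Since the $A_i$ partition $\Omega$, on each $A_i$ the $i$-th summand is at least $F^{-1}(1-1/n)$ almost surely.

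Let $a \in \R$ be the essential infimum of $F$; since a decreasing density is supported on a half-line of the form $[a,\infty)$ (a decreasing integrable density cannot extend to $-\infty$), we have $a > -\infty$. The construction above then yields the deterministic bound $\sum_{i=1}^n X_i^{(n)} \geq F^{-1}(1-1/n) + (n-1) a$ almost surely. By the hypothesis $F^{-1}(1-1/n)/n \to \infty$, for any fixed $M>0$ the right-hand side eventually exceeds $nM$, so $\P\bigl[\sum_i X_i^{(n)} < nM\bigr] = 0$ for all sufficiently large $n$. Taking the infimum gives $m_+^n(nM) = 0$ eventually and the claim follows.

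The only subtlety is the accounting for the lower endpoint $a$ of the support, which the decreasing-density hypothesis guarantees to be finite; if one were to allow $a = -\infty$, the other $n-1$ summands would not admit a uniform lower bound and the argument would break. Aside from this, the construction is the standard anti-comonotonic rotation that spreads the top $1/n$-mass of $F$ across disjoint events, so that at every outcome at least one summand is drawn from the upper tail $[F^{-1}(1-1/n),\infty)$, and the heavy-tail growth condition takes care of the rest.
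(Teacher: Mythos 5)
Your proof is correct, and it takes a genuinely different route from the paper's. The paper invokes the explicit formula $m_+^n(s) = \phi_n^{-1}(s)$ of \cite[Theorem 3.4]{WPY}, with $\phi_n(s) = H_s\bigl(c_n(s)\bigr)$, and derives $\lim_n \phi_n(1-\varepsilon)/n = \infty$ from $c_n(1-\varepsilon) \leq \tfrac1n$ and monotonicity of $F^{-1}$, concluding that $\phi_n^{-1}(nM) < 1-\varepsilon$ for large $n$ and letting $\varepsilon \to 1$. You instead bypass the exact formula entirely: since $m_+^n(nM)$ is an infimum over couplings, a single well-chosen coupling gives an upper bound, and your rotated inverse-cdf construction forces one summand into the top $1/n$-quantile at every outcome while the decreasing-density hypothesis pins the remaining $n-1$ summands above the finite left endpoint $a$ of the support (your justification that a non-increasing integrable density has support bounded below is sound). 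This yields the deterministic bound $S_n \geq F^{-1}(1-\tfrac1n) + (n-1)a$, hence $m_+^n(nM)=0$ for all large $n$. What your approach buys: it is more elementary (only \cite[Lemma 2.2]{WPY} is needed, to identify $\P[\tilde S_n < s]$ with the infimum, not the explicit formula of Theorem 3.4), it gives the strictly stronger conclusion that $\P[\tilde S_n \geq nM]$ equals $1$ for all sufficiently large $n$ rather than merely in the limit, and it makes transparent that the decreasing-density assumption enters the asymptotic claim only through the finiteness of the lower endpoint. What the paper's route buys is quantitative control of $m_+^n(s)$ for all $s$ via the closed-form $\phi_n$, which your existence argument does not provide.
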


This statement is quite remarkable: Given very heavily tailed random variables, the power of the correlation structure is so strong that the sum of the optimal coupling can almost surely surpass any linearly growing barrier. Or put it differently: the average of optimally coupled identically distributed random variables $\frac{\tilde{S}_n}{n}$ diverges to infinity with probability one. To prove this proposition, we make use of recent results on joint mixability of \cite{WPY}.

\begin{proof}
	From Theorem 3.4 of \cite{WPY}, the bound can be calculated directly as $m_+^n(s) = \phi_n^{-1}(s),$ in which $\phi_n$ is explicitly known. In the case when $F$ has a decreasing density, $\phi_n(s) := H_s\bigl(c_n(s)\bigr)$ with
	\[
		H_s(x) = (n-1) F^{-1}\bigl(s + (n-1)x\bigr) + F^{-1}(1-x)
	\]
	for $s \in [0,1]$ and $x\in\bigl(0,\frac{1-s}{n}\bigr]$, and 
	\[
		c_n(s) = \inf\biggl\{ c \in \biggl(0,\frac{1-s}{n}\biggr] \, : \, \int_c^\frac{1-s}{n} H_s(t)\, dt \geq \biggl(\frac{1-s}{n}-c\biggr)H_s(c)\biggr\}.
	\]
	Note that the infimum should be taken over $\bigl(0,\frac{1-s}{n}\bigr]$ instead of $\bigl[0,\frac{1-s}{n}\bigr]$ as written in the original paper, see Formula (3.7) in \cite{WPY}. We note first that for arbitrary $\varepsilon \in (0,1)$ as $c_n(1-\varepsilon) \leq \frac{1}{n}$ we have
	\begin{align*}
		\lim_{n \to \infty} \frac{\phi_n(1-\varepsilon)}{n} & = \lim_{n \to \infty} \Bigl(1-\frac{1}{n}\Bigr) F^{-1}\Bigl(1 + (n-1)c_n(1-\varepsilon)\Bigr) + \frac{1}{n}F^{-1}\Bigl(1-c_n(1-\varepsilon)\Bigr)\\
		&  \geq \lim_{n \to \infty} \frac{1}{n} F^{-1}\Bigl(1-\frac{1}{n}\Bigr) = \infty.
	\end{align*}
	Therefore, as $\phi_n$ is a strictly increasing function, we have for every $M > 0$ that
	\[
		\lim_{n \to \infty} \phi_n^{-1}(nM) <1 - \varepsilon
	\]
	and thus
	\[
		\lim_{n \to \infty} \P\bigl[\tilde{S}_n \geq nM\bigr] = 1- \lim_{n \to \infty} \phi_n^{-1}(nM) > \varepsilon.
	\]
	We can conclude that
	\[
		\lim_{M \to \infty} \sup_{X \in \convc(F)} \P\bigl[ \vert X \vert \geq M \bigr]  \geq 	\lim_{M \to \infty} \lim_{n \to \infty}  \P\bigl[\tilde{S}_n \geq nM\bigr]  > \varepsilon.
	\]
	Finally, as $\varepsilon$ was chosen arbitrarily in $(0,1)$, we can conclude that the limit is even $1$.
\end{proof}

An obvious example for a distribution satisfying the conditions of Proposition~\ref{propW} is the folded Cauchy distribution with density $f(x) = \frac{2}{\pi(1+x^2)}\ind_{\{x>0\}}$. Note however that the statement is strictly weaker than that of Lemma~\ref{integrability}, not only since we have the additional condition that the density is decreasing, but also because there are distributions that fail the condition $\lim_{n \to \infty} \frac{1}{n} F^{-1}(1-\frac{1}{n}) = \infty$, but nevertheless do not have finite mean. As example one might consider $f(x) = \frac{1}{(1+x)^2} \ind_{\{x >0\}}$. The reason for this gap is that while considering the optimal coupling we allow only for equally weighted averages, while in the convex closure we allow for arbitrarily weighted ones.

\end{document}